\newcommand{\Acal}{{\mathcal A}}
\newcommand{\Bcal}{{\mathcal B}}
\newcommand{\Dcal}{{\mathcal D}}
\newcommand{\Ical}{{\mathcal I}}
\newcommand{\Ocal}{{\mathcal O}}
\newcommand{\Scal}{{\mathcal S}}
\newcommand{\Vcal}{{\mathcal V}}
\newcommand{\Xcal}{{\mathcal X}}
\newcommand{\1}{{\mathbf{1}}}
\newcommand{\argmin}{\mathop{\rm argmin}}
\newcommand{\argmax}{\mathop{\rm argmax}}
\newtheorem{prop}{Proposition}
\newtheorem{thm}{Theorem}
\newtheorem{definition}{Definition}
\newtheorem{remark}{Remark}
\newcommand{\vct}[1]{\boldsymbol{#1}}
\newcommand{\mtx}[1]{\boldsymbol{#1}}
\newcommand{\vc}{\vct{c}}
\newcommand{\vh}{\vct{h}}
\newcommand{\mM}{\mtx{M}}
\newtheorem{theorem}{Theorem}[section]
\newtheorem{corollary}{Corollary}[theorem]
\newtheorem{lemma}[theorem]{Lemma}
\newtheorem{assumption}{Assumption}
\newcommand{\diag}{\operatorname{diag}}
\newcommand{\NSW}{\operatorname{\mathbf{NSW}}}
\newcommand{\logNSW}{\operatorname{\mathbf{logNSW}}}
\newcommand{\exploitability}{\operatorname{\mathbf{expl}}}
\newcommand{\efficiency}{\operatorname{\mathbf{efficiency}}}
\newtheorem{example}[theorem]{Example}
\title{Regularized Proportional Fairness Mechanism for Resource Allocation Without Money}
\author{
\name Sihan Zeng \email sihan.zeng@jpmchase.com \\
\addr JPMorgan AI Research
\AND
\name Sujay Bhatt \email sujay.bhatt@jpmchase.com \\
\addr JPMorgan AI Research
\AND
\name Alec Koppel \email alec.koppel@jpmchase.com\\
\addr JPMorgan AI Research
\AND
\name Sumitra Ganesh \email sumitra.ganesh@jpmorgan.com\\
\addr JPMorgan AI Research
}
\begin{document}

\maketitle

\begin{abstract}

Mechanism design in resource allocation studies dividing limited resources among self-interested agents whose satisfaction with the allocation depends on privately held utilities. We consider the problem in a \textit{payment-free} setting, with the aim of maximizing social welfare while enforcing incentive compatibility (IC), i.e., agents cannot inflate allocations by misreporting their utilities. The well-known proportional fairness (PF) mechanism achieves the maximum possible social welfare but incurs an undesirably high \emph{exploitability} (the maximum unilateral inflation in utility from misreport and a measure of deviation from IC). In fact, it is known that no mechanism can achieve the maximum social welfare and exact incentive compatibility (IC) simultaneously without the use of monetary incentives \citep{cole2013mechanism}. 
Motivated by this fact, we propose learning an \emph{approximate} mechanism that desirably trades off the competing objectives. 
Our main contribution is to design an innovative neural network architecture tailored to the resource allocation problem, which we name Regularized Proportional Fairness Network (\texttt{\textbf{RPF-Net}}). \texttt{\textbf{RPF-Net}} regularizes the output of the PF mechanism by a learned function approximator of the most exploitable allocation, with the aim of reducing the incentive for any agent to misreport. We derive generalization bounds that guarantee the mechanism performance when trained under finite and out-of-distribution samples and experimentally demonstrate the merits of the proposed mechanism compared to the state-of-the-art.

The PF mechanism acts as an important benchmark for comparing the social welfare of any mechanism. However, there exists no established way of computing its exploitability. The challenge here is that we need to find the maximizer of an optimization problem for which the gradient is only implicitly defined. We for the first time provide a systematic method for finding such (sub)gradients, which enables the evaluation of the exploitability  of the PF mechanism through iterative (sub)gradient ascent.

\end{abstract}

%
\section{Introduction}\label{sec:intro}

Mechanism design studies how to \emph{allocate} items (resources) to market participants (\emph{agents}) holding private preferences  with the goal of maximizing a criterion such as cumulative revenue \citep{manelli2007multidimensional,navabi2018revenue,kazumura2020strategy} or social welfare \citep{balseiro2019multiagent,padala2021mechanism}, while ensuring  \emph{incentive compatibility} (IC) \citep{hurwicz1973design}, i.e., that strategic agents cannot inflate their allocation by misrepresenting utilities. 


This problem class breaks down along (i) the number of items to allocate, 
(ii) whether items are divisible or indivisible, 
(iii) the manner in which demand is communicated to the supplier (monetary or non-monetary preferences),  (iv) the measure of fairness/social welfare. 
(v) whether the agents are truthful or strategic. 
Regarding (i): in the single-item multiple-agent case, optimal mechanism design has been resolved when monetary payments are made by the agents to the supplier \citep{myerson1981optimal}.
Addressing the case of multiple bidders and items is the focus of this work, following \citet{cai2012algorithmic,cai2012optimal,yao2014n}.

Regarding (ii), indivisible item settings  exhibit fundamental bottlenecks due to combinatorial optimization underpinnings \citep{chevaleyre2006issues,sonmez2011matching}. 
In this work, we study the divisible setting, which is applicable to, e.g., financial assets \citep{ko2008resource} and GPU hours \citep{ibrahim2016resource,aguilera2014fair}.\looseness=-1

Regarding (iii): an agent requesting resources must indicate its interest in the form of a monetary bid or alternatively provide a utility function that represents its preferences.
Numerous prior works on resource allocation consider auctions, where payments indicate interest in resources \citep{pavlov2011optimal,giannakopoulos2014duality,dütting2022optimal}, yet many settings forbid payment (e.g. organ donation, food and necessity distribution by charity, allocation of GPU hours by an institution to its employees).
Our work focuses on designing mechanisms without money \citep{DFP10,PT13,cole2013mechanism}. 

(iv) In the payment-free setting, measuring whether an allocation is fair among agents on a social level
may be formalized through \textit{proportional fairness} (\textbf{PF}) \citep{Key97}. 
An allocation is said to achieve PF if any deviation from the allocation results in a non-positive change in the percentage utility gain summed over all agents.
It is known (see \citet{bertsimas2011price}) that PF is achieved by maximizing Nash social welfare (NSW), the product of all agents' utilities, which we use as a quantifier of fairness/social welfare in this work. 

What remains is (v) whether agents are truthful or strategic, i.e., they may misreport their preferences. In the later case, the mechanism needs to be designed so as not to provide incentive for the agents to misreport, i.e., to possess incentive compatibility (IC). The \textbf{PF mechanism}, which directly optimizes NSW of the allocation outcome assuming truthful reporting, is \textit{not} incentive compatible. By contrast, several recent works study IC mechanism design without money \citep{DFP10,PT13,cole2013mechanism}. Most germane to this discussion is the \textbf{Partial Allocation} (PA) mechanism \citep{cole2013mechanism}  based on ``money burning''~\citep{HR08}, where the supplier intentionally withholds a proportion of the resources as an artificial payment. 
However, this mechanism is incentive compatible at significant cost to NSW.

In this work, we develop a \textit{payment-free} mechanism that balances the competing criteria of fairness (in NSW sense) and IC.
Our approach formulates the allocation problem as the one that maximizes NSW subject to a constraint on \textit{exploitability}, which quantifies the maximum obtainable gain from misreporting and measures the deviation from exact IC. Similar trade-offs are considered in related resource allocation settings that require monetary payment \citep{dütting2022optimal, ivanov2022optimal} and those not involving payments \citep{zeng2023learning}, as well as in game-theoretic pricing models \citep{goktas2022exploitability}. 
We detail our {\bf main contributions} below.
\begin{itemize}
\item We develop a subgradient-based method for computing the exploitability of the PF mechanism. 
The PF mechanism acts as an important benchmark for welfare maximization in resource allocation. However, there exists no established method for computing the exploitability of the PF mechanism yet, making it difficult to compare against. This paper fills in the gap. As the PF mechanism is defined through an optimization program, our main innovation here is to derive the (sub)gradient of the output of the program with respect to the input, extending the results from the differentiable convex programming literature. This allows us to calculate the exploitability by iterative (sub)gradient ascent. We believe that the technique for differentiating through the PF mechanism may be of broader interest beyond mechanism design.

\item We propose a novel neural network architecture, \texttt{\textbf{RPF-Net}}, that is highly effective for learning an approximately fair and IC allocation mechanism. 
\texttt{\textbf{RPF-Net}} can be interpreted from two perspectives. One is that it modifies the PF mechanism by adding a linear regularization designed to increase the robustness to potential misreports. The regularization is computed from the output of a neural network learned from data.
From another perspective, \texttt{\textbf{RPF-Net}} can be viewed as standard neural network with a specific-purpose activation function tailored to the resource allocation problem structure. 
We derive the (sub)gradient of \texttt{\textbf{RPF-Net}}, which enables efficient end-to-end training.

\item We provide a generalization bound of \texttt{\textbf{RPF-Net}} when it is trained on finite samples (Theorem \ref{thm:GB}). The result shows that the generalization error decays at rate $\Ocal(L^{-1/2})$ where $L$ is the training sample size.
In auction design with payment, RegretNet, the state-of-the-art learned mechanism proposed in \citet{dütting2022optimal}, also achieves a $\Ocal(L^{-1/2})$ rate, which we  match in the non-payment setting. Our bound is derived by adapting and extending the techniques in \citet{dütting2022optimal} to the special activation function in \texttt{\textbf{RPF-Net}}.

\item We provide a guarantee under distribution shift. The implication of this result is that we can train \texttt{\textbf{RPF-Net}} with samples from distribution $F'$ and expect them to perform well under distribution $F$, provided that the mismatch between $F$ and $F'$ is controlled.

\item We experimentally validate \texttt{\textbf{RPF-Net}} across problem dimensions and distribution shift.
The robustness to distribution shift, specifically, demonstrates the significance of our contribution by allowing the training data itself to be subjected to adversarial contamination. 
\end{itemize}

\subsection{Related Work}

Our paper relates to the existing literature on resource allocation with and without monetary payments, as well as those that study incentive compatibility in the context of mechanism design and beyond. We discuss the most relevant works in these domains to give context to our contribution.

\noindent\textbf{Resource allocation without payment:} 
\citet{GuoContizer} considers the problem of allocating two divisible resources to two agents and show that any IC mechanism achieves at most $0.841$ of the optimal social welfare in the worst case (defined as an overall utility of all agents). They further show that a linear increasing-price mechanism nearly achieves the lower bound. \citet{10.1007/978-3-642-25510-6_16} obtains an analogous worst-case lower bound for a more general case of $n$ agents but does not provide a mechanism. 
\citet{cole2013mechanism} proves that no IC mechanism can guarantee to every agent a fraction of their proportionally fair allocation greater than $\frac{n+1}{2n}$ as the number of resources becomes large, and proposes a partial allocation (PA) mechanism that allocates to each agent a fraction (at least $1/e$) of the corresponding PF allocation. 
These works assume that the supplier has no prior knowledge of the agents' preference (or its distribution), whereas our work considers the setting where (possibly inaccurate) historical samples of the agents' utility parameters are available for training. 
Also assuming the access to training samples, \citet{zeng2023learning} is highly related to our work. \citet{zeng2023learning} studies learning an approximately fair and IC mechanism named ExS-Net parameterized by a neural network and trains the network parameters on the same objective that we consider in this work. 
Our important distinction from ExS-Net lies in the activation function of the output layer: while the activation function in ExS-Net is composed of a simple softmax function and a synthetic agent which receives the portion of resources to be withheld, the activation function we propose leverages a convex optimization program.
ExS-Net is an important baseline for comparison.  
In Section~\ref{sec:simulation} we show that the proposed \texttt{\textbf{RPF-Net}} mechanism materially outperforms ExS-Net due to the innovation in the network architecture.

\noindent\textbf{Auction design/resource allocation with payment:}
When payments from the agents to the supplier are allowed, \citet{chawla2010multi,Rou10,BGM22} design sophisticated monetary transfer schemes that ensure truthful reporting. More recently, \citep{dütting2022optimal, ivanov2022optimal} adopt a learning-based framework similar to the one in our paper.
A neural-network-parameterized mechanism, \texttt{RegretNet}, is proposed in \citep{dütting2022optimal}, which determines the price of a resource with the aim of increasing supplier's revenue while guaranteeing the approximate truthfulness of the agents.
Our work is partially inspired by \texttt{RegretNet}.
Similar to our ``differentiation through an optimization program'' approach in spirit, \citet{curry2022learning} learns approximately IC and revenue-maximizing auction mechanisms, designed through a differentiable regularized linear program. Compared to \texttt{RegretNet}, the method in \citet{curry2022learning} enlargens the solvable class of  auctions.

\noindent\textbf{Approximate incentive compatibility in ML:} \citet{DFP10} consider regression learning where the  global goal is to minimize average loss in the setting of strategic agents that might misreport their values over the input space. When payments are disallowed, they present a mechanism which is approximately IC and optimal for the special case of the buyer's utilities defined by the absolute loss.
More recently, \citep{NEURIPS2020_ae87a54e} focus on learning linear classifiers, when the training data comes in online manner from the strategic agents who can misreport the feature vectors, and propose an algorithm that exploits the geometry of the learner's action space. 
\citet{ravindranath2021deep} is another highly related work that designs approximately IC and stable mechanisms for two-sided matching using concepts from differentiable economics.

\vspace{2pt}
\noindent\textbf{Organization.} The rest of the paper is structured as follows. In Section~\ref{sec:problem} we formulate the payment-free resource allocation problem and introduce the existing mechanisms. In Section~\ref{sec:RPF}, we present the proposed mechanism with the novel neural network architecture as well as the procedure to train the mechanism. In Section~\ref{sec:differentiablePF} we develop the methods for evaluating the exploitability of the PF mechanism and for back-propagating the gradient through \texttt{\textbf{RPF-Net}} during training. Section~\ref{sec:convergence} provides guarantees on the performance of \texttt{\textbf{RPF-Net}} trained under finite and out-of-distribution samples. Simulations that illustrate the merits of the proposed mechanism are presented in Section~\ref{sec:simulation}. Finally, we conclude and reflect on the connection to the literature in Section~\ref{sec:conclusion}.

\section{Preliminaries \& Problem Formulation}\label{sec:problem}
We study the problem of allocating a finite number $M$ of divisible resources to $N$ agents in the form of a vector $a\in\mathbb{R}^{N M}$, with $a_{i,m}$ the amount of resource $m$ allocated to agent $i$. There is a budget $b_m\geq 0$ on each resource $m=1,\cdots,M$, and we denote $b=[b_1,\cdots,b_M]^{\top}\in\Bcal\subseteq\mathbb{R}_+^{M}$. 
We assume that every agent has a (thresholded) linear additive utility -- each additional unit of resource $m$ linearly increases the utility of agent $i$ by value $v_{i,m}$, up to the demand $x_{i,m}$. 
We represent the overall values and demands as $v_i=[v_{i,1},\cdots,v_{i,M}]^{\top}$, $x_i=[x_{i,1},\cdots,x_{i,M}]^{\top}$, and $v=[v_1^{\top},\cdots,v_N^{\top}]^{\top}$, $x=[x_1^{\top},\cdots,x_N^{\top}]^{\top}$.
Given allocation $a\in\mathbb{R}_+^{NM}$, value $v\in\Vcal$, and demand $x\in\Dcal$, the utility function $u:\mathbb{R}_+^{NM}\times\Vcal\times\Dcal\rightarrow\mathbb{R}_+^{N}$ is agent-wise expressed as
\begin{align}
    u_i(a,v,x) \triangleq {\sum}_{m=1}^{M}v_{i,m}\min\{a_{i,m},x_{i,m}\}, \label{def:utility}
\end{align}
with 
$\Vcal\subseteq[\underline{v},\overline{v}]^{NM}$, $\Dcal\subseteq[\underline{d},\overline{d}]^{NM}$ for some scalars $\underline{v},\overline{v},\underline{d},\overline{d}>0$.

%
%
%
%
%
%
%
The linear additive utility structure is widely considered and realistically models satisfaction in many practical problems \citep{bliem2016complexity,camacho2021beyond}, though the mechanism proposed in the paper may handle alternative utility functions.
The supplier knows the functional form of the utility function and relies on each agent $i$ to report the parameters $v_{i}$ and $x_{i}$. It may be in the interest of the agents to report untruthfully.

Since the allocation needs to satisfy budget constraints and should not exceed what the agents request, valid allocations have to be contained in the set $\Acal_{b,x}=\{a\in\mathbb{R}^{N M}:0\leq a_{i,m}\leq x_{i,m}, \forall i,m,\,\sum_{i=1}^{N}a_{i,m}\leq b_i,\forall m\}$. A mechanism may incorporate non-negative agent weights $w\in\mathbb{R}_+^{N}$ to encode  prioritization. Next, we formalize that a mechanism is a mapping from values, demands, budgets, and weights to a valid allocation.
\begin{definition}[Mechanism]\label{def:mechanism}
    A mapping $f:\Vcal\times\Dcal\times\Bcal\times\mathbb{R}_+^{N}\rightarrow\mathbb{R}_+^{NM}$ is a mechanism if $f(v,x,b,w)\in\Acal_{b,x}$ for all $v\in\Vcal,x\in\Dcal,b\in\Bcal,w\in\mathbb{R}_+^N$.
\end{definition}
Agent weights $w$ may be determined
solely by the supplier before the agents reveal their requests or can potentially be a function of the
reported values and demands.
Since agents may not directly alter weights, we suppress dependence on $w$ to write allocations as $f(v,x,b)$.

Two competing axes for characterizing the merit of a mechanism exist: social welfare and incentive compatibility (IC). Among the many social welfare metrics (see~\citep{bertsimas2011price}), we consider Nash social welfare (NSW), which balances fairness and efficiency.
\begin{definition}[Nash Social Welfare]
The NSW of a mechanism $f$ is
\begin{gather}
\NSW(f,v,x,b)\triangleq \Pi_{i=1}^{N}u_i(f(v,x,b),v,x)^{w_{i}},\label{def:NSW}
\end{gather}
\end{definition}

Due to NSW as the product of agents' utilities, it may become numerically unstable as the number of agents scales up. In comparison, dealing with the logarithm NSW is usually more convenient
\[\logNSW(f,v,x,b) \triangleq \log(\NSW(f,v,x,b)).\]
The definition simplifies to $\logNSW(f,v,x,b)=w^{\top}\log u(f(v,x,b),v,x)$.  

To quantify approximate incentive compatibility, we next introduce \emph{exploitability} as the maximum positive unilateral deviation in an agent' utility when it misreports its preferences.

\begin{definition}[Exploitability]
Under mechanism $f$ and $v\in\Vcal,x\in\Dcal,b\in\Bcal$, we define the exploitability at agent $i$ as its largest possible utility increase due to misreporting, given that the every agent $j\neq i$ reports $v_j, x_j$\looseness=-1
\begin{align}\label{def:exploitability}
\exploitability_i(f,v,x,b) \triangleq \max_{v_i',x_i'}u_i\big(f((v_i',v_{-i}),(x_i',x_{-i}),b),v,x\big)-u_i\big(f(v,x,b),v,x\big).
\end{align}
\end{definition}

A mechanism $f$ is \textbf{\textit{incentive compatible}} (IC) if it satisfies $\exploitability_i(f,v,x,b)=0$, for all $v,x,b$ and $i=1,\cdots,N$.

Assuming that the agents truthfully report their preferences, the following mechanism directly seeks to optimize the (log) NSW and achieves the largest possible NSW by definition. The mechanism is usually referred to as the Proportional Fairness (\textbf{PF}) mechanism, as the solution satisfies the proportional fairness property: when switching to any other allocation the percentage utility changes across all agents sum up to a non-positive value \citep{caragiannis2019unreasonable}.
\begin{mechanism*}{Proportional Fairness}
\vspace{-5pt}
\begin{align}
\begin{aligned}
f^{PF}(v,x,b)=\argmin_{a\in\mathbb{R}^{NM}} \,\, & -{\sum}_{i=1}^{N}w_i\log(a_i^{\top}v_i)\\
\text{s.t.} \quad& 0\leq a\leq x,\ {\sum}_{i=1}^{N}a_{i,m}\leq b_m,\quad\forall m.
\end{aligned}
\label{def:PF}
\end{align}
\end{mechanism*}
With $D=\vct{1}_{N}^{\top}\otimes I_{M\times M}\in\mathbb{R}^{M\times NM}$, the last inequality of \eqref{def:PF} can be expressed in the matrix form $D a\leq b$.

It is important to note that the PF mechanism is not IC, i.e., it has nonzero exploitability. This fact motivates us to develop ways to ensure it is approximately so by trading off NSW.

\subsection{PF Mechanism Has a Non-Zero Exploitability}

\begin{wrapfigure}{r}{0.5\textwidth}
	\begin{center}
	\vspace{-15pt}
    \includegraphics[width=.5\textwidth]{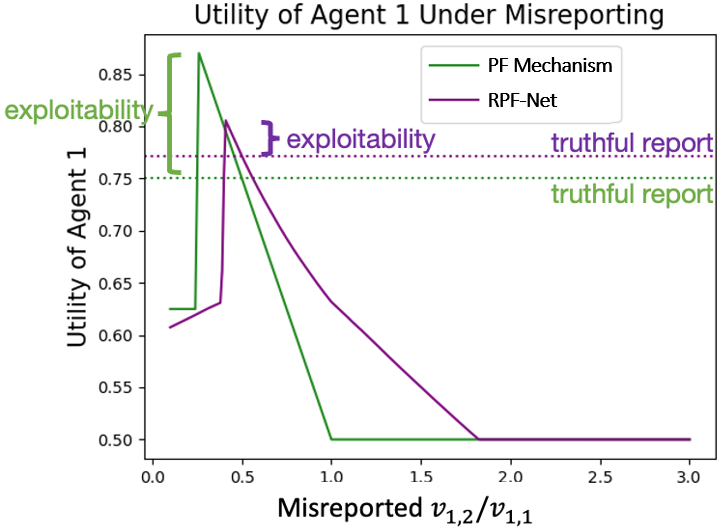}
	\end{center}
	\vspace{-10pt}
	\caption{Exploitability of PF and learned \texttt{\textbf{RPF-Net}}.}
	\vspace{-5pt}
    \label{fig:exploitability_2x2}
\end{wrapfigure} 

In this section, we illustrate the exploitability of the PF mechanism using a simple example. 
Consider a two-agent two-resource allocation problem, in which we choose $x=\1_{NM}$, $b=\1_{M}$, and $w=\1_{N}$. Both agents have a higher valuation for the first resource, with $v_1=\{1,1/2\}$ and $v_2=\{1,1/4\}$.
When agent 2 reports truthfully, Figure~\ref{fig:exploitability_2x2} plots the utility of agent 1 [cf. \eqref{def:utility}] as it varies the reported preference ratio $v_{1,2}/v_{1,1}$ from 0.1 to 3 (the true ratio is 0.5). With the dashed line indicating the utility of agent 1 under truthful reporting,  under-reporting $v_{1,2}/v_{1,1}$ increases agent 1's utility up to 17\%. By contrast, \texttt{\textbf{RPF-Net}} proposed in this work substantially reduces the exploitability.

The state-of-the-art IC mechanism in the non-payment setting is Partial Allocation (\textbf{PA}) \citep{cole2013mechanism}, built upon the PF mechanism, which strategically withholds resources to ensure truthfulness\footnote{Detailed presentation of the PA mechanism can be found in Appendix~\ref{sec:PA}.}.
Under the PA mechanism, each agent can only be guaranteed to at least receive a $1/e$ fraction of the resources that it would receive under the PF mechanism, meaning that there is resource waste and a significant reduction in NSW.
In other words, the best known IC mechanism achieves a sub-optimal NSW, whereas the PF mechanism is optimal in NSW but incurs high exploitability. 
This dichotomy does not exist in the auction setting due to payments as an enforcement for truthful reporting.
At least in the case $m=1$, classic auction mechanisms including the VCG mechanism \citep{vickrey1961counterspeculation,clarke1971multipart,groves1973incentives} and Myerson mechanism \citep{myerson1981optimal} can achieve IC and revenue maximization.
However, in the payment-free setting, NSW maximization and IC are conflicting objectives that are provably not perfectly achievable at the same time \citep{HR08,cole2013mechanism}.

\subsection{Learning Approximate Payment-Free Mechanism}
In this work, we are motivated to design a mechanism that approximately optimizes NSW and exploitability and strikes a desirable balance between them.
Specifically, assuming that in a resource allocation problem the values and demands of the agents and the budgets on the resources follow a joint distribution $F$, we formalize our objective as maximizing the (log)NSW subject to a constraint on the exploitability for some $\epsilon\geq 0$.
\begin{align}\label{eq:RPF-Net_train_obj}
\begin{aligned}
\max_{\omega}\,\,&\mathbb{E}_{(v,x,b)\sim F}[\logNSW(f_\omega,v,x,b)]\\
\text{s.t.}\,\,& \mathbb{E}_{(v,x,b)\sim F}[\exploitability_i(f_\omega,v,x,b)]\leq\epsilon,\quad\forall i
\end{aligned}
\end{align}
We say a mechanism is $\epsilon$-incentive compatible over the distribution $F$ if it satisfies the constraints in \eqref{eq:RPF-Net_train_obj}.

PA and PF mechanisms are on the two ends of the trade-off between NSW and exploitability that are not optimal/feasible in the sense of \eqref{eq:RPF-Net_train_obj}. 
In this work, we propose parameterizing the mechanism with a neural network and learning the solution to \eqref{eq:RPF-Net_train_obj} from data. This can be regarded as an adaptation of \citet{dütting2022optimal} to the payment-free setting. However, different from the auction setting, due to the impossibility result which states that IC cannot be achieved without ``money burning'' (i.e., resource withholding) \citep{HR08}, a regular neural network not equipped with the ability to withhold may fail to achieve low exploitability. In the following section, we propose a mechanism built on a novel neural network architecture specially designed for learning \eqref{eq:RPF-Net_train_obj}.
We show later through numerical simulations that the architectural innovation is crucial -- the mechanism with the proposed architecture performs substantially better than a learned mechanism parameterized by a regular neural network or ExS-Net \citep{zeng2023learning} which is the state-of-the-art learning-based solution.


\section{Regularized Proportional Fairness Network}\label{sec:RPF}

We develop a novel learning-based mechanism for resource allocation that can be regarded as a composition of a neural network (in this work we employ a feed-forward neural network of identical structure to parameterize both mechanisms, but in general any function approximation can be used) and a novel specific-purpose activation function. 
The mechanism, \texttt{\textbf{RPF-Net}} (\textbf{R}egularized \textbf{P}roportional \textbf{F}airness \textbf{Net}work), is a regularized variant of the PF mechanism that diverts allocation from what the agents would like to receive when they misreport to their largest advantage. 
A schematic representation of the mechanism is presented in Figure~\ref{fig:flow}.

\begin{figure}[h]
  \centering
  \includegraphics[width=.85\linewidth]{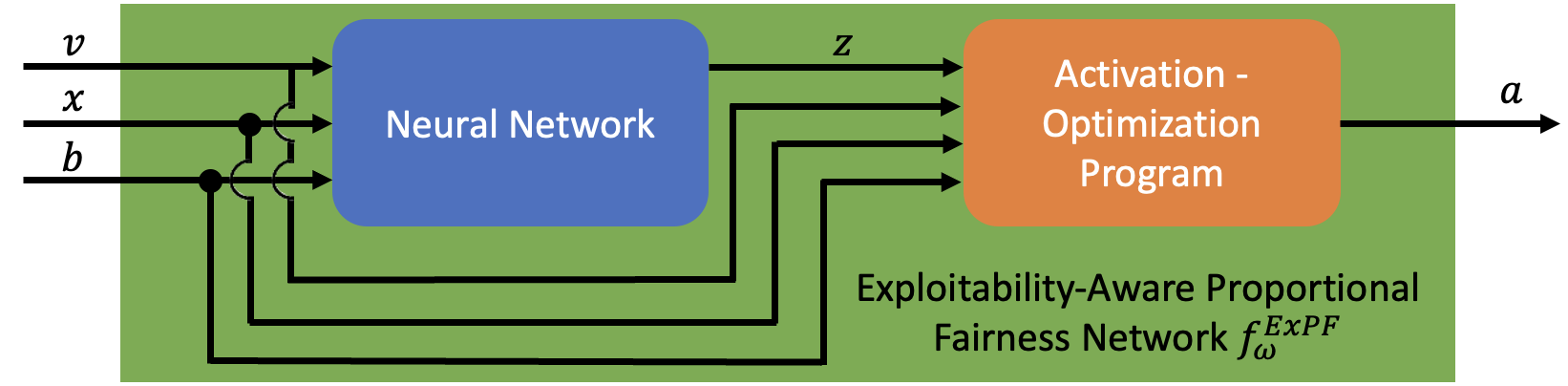}
  \caption{\texttt{\textbf{RPF-Net}} Pipeline.} 
  \label{fig:flow}
\end{figure}

\subsection{Neural Network Design}
The motivation behind \texttt{\textbf{RPF-Net}} is that we would like to reduce the agents' incentive to misreport through a linear regularization added to the NSW objective. We use a function approximation to track the allocation that an agent aims to receive when misreporting, and penalize such allocation to be produced.

Suppose that each agent $i$ has the perfect information about the mechanism and the parameters of others $(x_{-i},v_{-i})$. The most adversarial misreport occurs when it maximizes its own utility under mechanism $f$, i.e., it would select $(\tilde{v}_i,\tilde{x}_i)$:
\begin{align}
(\tilde{v}_i,\tilde{x}_i)={\argmax}_{v_i',x_i'} u_i(f((v_i',v_{-i}),(x_i',x_{-i}),b),v,x).\label{eq:tilde_vx}
\end{align}


Let $y_i=f_i((\tilde{v}_i,v_{-i}),(\tilde{x}_i,x_{-i}),b)\in\mathbb{R}^m$ denote the allocation to agent $i$ under the optimal misreport. The goal of agent $i$ when misreporting is to obtain allocation $y_i$.
To ensure robustness to misreporting, the mechanism $f$ should avoid allocating $y_i$ to agent $i$, which could be enforced as a constraint:
\begin{equation}\label{eq:constraint_imposition}
\langle a_i,y_i\rangle\leq \delta,\,\,\forall i
\end{equation}
for some constant $\delta>0$.
With $\delta$ selected sufficiently small, alignment between the allocation decision $a_i$ and the direction $y_i$ desired by the untruthful agent $i$ is penalized by the inner-product constraint, which prevents allocating $a_i=y_i$. In other words, the constraint \eqref{eq:constraint_imposition} ensures that the allocation to an untruthful agent deviates from what it most desires, hence reducing the incentive to misreport.
This line of reasoning is in the spirit of robust control \citep{zhou1998essentials,borkar2002q}.

Denote the stacked vector of nominal allocations as $y=[y_1^{\top},\cdots,y_N^{\top}]^{\top}\in\mathbb{R}^{NM}$. 
If $y$ were known, \eqref{def:PF} with the additional inner-product constraint would be equivalent to
\begin{align}
\begin{aligned}
{\argmin}_{a\in\mathbb{R}^{NM}}&\,\,-{\sum}_{i=1}^{N}w_i\log(a_i^{\top}v_i)+{\sum}_{i=1}^{N}\xi_i^{\star}\langle a_i,y_i\rangle\\
\text{s.t.}\quad&0\leq a\leq x,\quad{\sum}_{i=1}^{N}a_{i,m}\leq b_m,\,\,\forall m 
\end{aligned}
\label{eq:augmented_PF}
\end{align}
where $\xi_i^{\star}$ is the optimal dual variable associated with the inner-product constraint \eqref{eq:constraint_imposition} for each $i$. 
Unfortunately, $y$ is difficult to compute as it encodes 
a notion of self-consistency -- $y$ defines
the mechanism \eqref{eq:constraint_imposition}
but is simultaneously the output of \eqref{eq:constraint_imposition} under specific misreports.
Due to difficulty of practical evaluation, we instead approximate $y$ from data. 
%

We propose substituting $\xi_i^{\star}y_i$ by a parameterized function approximator. To be more precise, since the vector $y_i$ can be a function of values, demands, and budgets, we fit $z_{\omega}:\Vcal\times\Dcal\times\Bcal\rightarrow\mathbb{R}^{NM}$ such that $[z_{\omega}(v,x,b)]_i$ tracks $\xi_i^{\star}y_i$ under reported values $v$, demands $x$, and budgets $b$. 
In this work, we take the function approximator to be a feed-forward neural network -- $w$ represents the set of weights and biases while $z_w$ represents the neural network parameterized by $w$ as a mapping from the input values, demands, and budgets to an allocation outcome.
Here $[z_{\omega}(v,x,b)]_i$ denotes the evaluation of the neural network restricted to the component associated with the allocation of agent $i$. 
Substituting $z_{\omega}$ into \eqref{eq:augmented_PF} leads to our first proposed approach, \texttt{\textbf{RPF-Net}}, whose associated mechanism [cf. Definition~\ref{def:mechanism}] is detailed below.
\begin{mechanism*}{Exploitability-Aware PF}
(1) Compute $z_{\omega}(v,x,b)$ as output of neural network\\
(2) Solve
\begin{align}
\begin{aligned}
f^{RPF}_\omega(v,x,b)=\argmin_{a\in\mathbb{R}^{NM}}\,&-{\sum}_{i=1}^{N}w_i\log(a_i^{\top}v_i)+\langle a,z_{\omega}(v,x,b)\rangle\\
\text{s.t.}\,\,0\leq a\leq x&,\,\,{\sum}_{i=1}^{N}a_{i,m}\leq b_m,\,\,\forall m.
\end{aligned}
\label{eq:RPF-Net}
\end{align}
\end{mechanism*}

Given a learned network parameter $\omega$, this procedure describes the operations to be performed in the inference phase. 
For any $\omega$, the output of the \texttt{\textbf{RPF-Net}} mechanism is always feasible, i.e., is non-negative, not exceeding the demand, and within the budget constraint, by the definition of the optimization problem \eqref{eq:RPF-Net}, which can be regarded as a special activation function operating on the output of the neural network $z_{\omega}(v,x,b)$. Note that for \eqref{eq:RPF-Net} to be a proper activation function we need to be able to back-propagate gradients through it. It is not obvious how we can back-propagate through \eqref{eq:RPF-Net} as the mapping from $z_{\omega}(v,x,b)$ to $f_{\omega}^{RPF}(v,x,b)$ is defined indirectly through an optimization problem. We address this important question in Section~\ref{sec:RPF_differentiation}.

\subsection{Training RPF-Net}

While we would like to solve the objective \eqref{eq:RPF-Net_train_obj} over distribution $F$,
in practice, we often may not have access to $F$, but instead a training set of values, demand, and budgets $\{(v^l,x^l,b^l)\sim F\}_{l=1}^L$.
We train the mechanisms with the finite dataset via empirical risk minimization (ERM) by forming the sample-averaged estimates of the expected NSW and exploitability.
\begin{align}
{\max}_{\omega}\,\,&{\sum}_{l=1}^{L}\logNSW(f_\omega,v^l,x^l,b^l)\label{eq:RPF_obj_ERM}\\
\text{s.t.}\,\,& {\sum}_{l=1}^{L}\exploitability_i(f_\omega,v^l,x^l,b^l)\leq\epsilon,\,\forall i\notag
\end{align}
The sample-averaged performance of a mechanism approaches its expectation as the number of collected samples increases -- the gap between them is known as \textbf{generalization error} \citep{SSB14}. In Sec.~\ref{sec:convergence}, we bound the generalization error by a sublinear function of batch size $L$ for both proposed mechanisms. 

It is important to note that the learning objective does not require paired samples of $(v^l,x^l,b^l)$ and the ground-truth optimal allocation under $(v^l,x^l,b^l)$. We only need samples of valuations, demands, and budgets to learn the optimal parameter $w$ in an unsupervised manner.

To train the neural network, we optimize \eqref{eq:RPF_obj_ERM} with respect to $\omega$ by using a simple primal-dual gradient descent-ascent algorithm to find the saddle point of the Lagrangian. We present the training scheme in Alg.~\ref{Alg:training}, where $\Pi_+$ denotes the projection of a scalar to the non-negative range. The dual variable $\gamma_i\in\mathbb{R}_+$ is associated with the $i_{\text{th}}$ exploitability constraint in \eqref{eq:RPF_obj_ERM}.

\begin{algorithm}[h]
\SetAlgoLined
\textbf{Input:} Initial network parameter $\omega^{[0]}$, dual variables $\{\gamma_i^{[0]}\}_{i=1}^{N}$, training dataset $\{(v^l,x^l,b^l)\}_{l=1}^{L}$, batch size $s$, training iterations $K$, primal and dual learning rate $\alpha,\beta$

\textbf{Output:} Network parameter $\omega^{[K]}$.

 \For{$k=0,1,\cdots,K-1$}{
  
    1) Randomly draw sample index set $\Scal^{[k]}$ with $|\Scal^{[k]}|=s$ and compute empirical logNSW and exploitability
    \begin{align*}
        \widehat{\logNSW}^{[k]} = {\sum}_{l\in\Scal^{[k]}}\logNSW(f_{\omega^{[k]}},v^l,x^l,b^l),\\
        \widehat{\exploitability}_i^{[k]} = {\sum}_{l\in\Scal^{[k]}}\exploitability_i(f_{\omega^{[k]}},v^l,x^l,b^l),\quad\forall i\\
    \end{align*}
    \vspace{-30pt}

    2) Neural network parameter update: 
    \begin{align*}
        \omega^{[k+1]}=\omega^{[k]}-\alpha \nabla_{\omega^{[k]}}\big(\sum_{i}\hspace{-1pt}\gamma_i^{[k]}\widehat{\exploitability}_i^{[k]}\hspace{-5pt}-\widehat{\logNSW}^{[k]}\big)
    \end{align*}

    \vspace{-10pt}
    3) Dual variable update:
    \vspace{-5pt}
    \begin{align*}
        \gamma_{i}^{[k+1]}=\Pi_{+}\Big(\gamma_{i}^{[k]}+\beta\big(\widehat{\exploitability}_i^{[k]}-\epsilon\big)\Big),\quad\forall i
    \end{align*}
    \vspace{-10pt}
 }
\caption{Training \texttt{\textbf{RPF-Net}}}
\label{Alg:training}
\end{algorithm}

The \texttt{\textbf{RPF-Net}} mechanism is composed of a neural network and an optimization program.
The latter can be interpreted as a special activation function on the network output. Passing $z$ through this activation in the forward direction requires solving \eqref{eq:RPF-Net}. 
Backward pass is practiced in the training stage, in which we need to back-propagate the training loss $\ell^{[k]} = \sum_{i}\hspace{-1pt}\gamma_i^{[k]}\widehat{\exploitability}_i^{[k]}\hspace{-5pt}-\widehat{\logNSW}^{[k]}$. As a critical step, we need to find the gradient $\frac{\partial \ell^{[k]}}{\partial z_{\omega^{[k]}}}$ given $\frac{\partial \ell^{[k]}}{\partial f_{\omega}^{RPF}}$.
Computing such gradients is a challenging task, as the mapping from $z_\omega$ to the solution of \eqref{eq:RPF-Net} is defined implicitly through the optimization program. 
In the following section, we develop the technique that allows $\frac{\partial \ell^{[k]}}{\partial z_{\omega^{[k]}}}$ to be computed (relatively) efficiently.
But still, the gradient computation requires inverting a square matrix of dimension $(3NM+M)\times(3NM+M)$. Both forward and backward operations can be increasingly expensive as the number of agents and resources scales up, which is a fact that we have to point out that may limit the scalability of \texttt{\textbf{RPF-Net}} in systems with computational constraints. Detailed complexity evaluation of  \texttt{\textbf{RPF-Net}} is given in Section~\ref{sec:complexity}.

\begin{remark}
We note that our work measures and optimizes exploitability only empirically and cannot guarantee strict exploitability constraint satisfaction, at least for two reasons. First, we cannot guarantee that gradient descent applied to the right hand side of Eq.~\eqref{def:exploitability} with respect to $v_i’,x_i'$ finds the global maximizer, especially when the mechanism is a complicated mapping encoded by a neural network. Second, while we aim to optimize the objective \eqref{eq:RPF_obj_ERM}, we cannot guarantee that the parameter $\omega$ obtained from training is a feasible solution. If the desired exploitability threshold $\epsilon$ is chosen too small, we sometimes observe constraint violation.
\end{remark}

\section{PF Mechanism Exploitability Evaluation \& RPF-Net Differentiation}\label{sec:differentiablePF}
Two main technical challenges are present at this point in training the proposed mechanism and evaluating its performance against benchmarks. 
\begin{enumerate}
\item It is unclear how we may systematically calculate the exploitability of the PF mechanism, which is the fundamental baseline for any IC mechanism design.
With a small number of resources, the exploitability can possibly be computed by exhaustively searching in the valuation and demand space, which we performed to generate Figure~\ref{fig:exploitability_2x2}. 
However, as the system dimension scales up, an exhaustive parameter search quickly becomes computationally intractable.
The main difficulty in computing $\exploitability_i(f^{PF},v,x)$ lies in finding the optimal misreport $(\tilde{v}_{i},\tilde{x}_{i})$ as the solution to \eqref{eq:tilde_vx}.
It may be tempting to solve \eqref{eq:tilde_vx} with gradient ascent. However, as the mapping $f^{PF}$ is implicitly defined through an optimization problem \eqref{def:PF}, it is unclear how $\frac{\partial f^{PF}}{\partial v_i'}$ and $\frac{\partial f^{PF}}{\partial x_i'}$ (and potentially $\frac{\partial f^{PF}}{\partial w}$, since $w$ may be a function of $x_i',v_i'$) can be derived or even whether $f^{PF}$ is differentiable. 
\item It is unclear how to back-propagate through \texttt{\textbf{RPF-Net}}. To train \texttt{\textbf{RPF-Net}} with gradient descent, we need to compute $\frac{\partial \ell}{\partial z_\omega}$ given $\frac{\partial \ell}{\partial f_{\omega}^{RPF}(v,x,b)}$ where $\ell$ is a downstream loss function calculated from $f_{\omega}^{RPF}(v,x,b)$. 
\end{enumerate}

We address both challenges in this section, by adapting and extending the techniques from differentiable convex programming \citep{amos2017optnet,agrawal2019differentiable}.
As an important contribution of the work, we characterize the (sub)differentiability of $f^{PF}$ (Section~\ref{sec:differentiability}) and develop an efficient method for calculating the (sub)gradients $\nabla_{v_i'}u_i,\nabla_{x_i'}u_i,\nabla_{w}u_i$ of \eqref{eq:tilde_vx} (Section~\ref{sec:compute_grad}).
This allows iterative (sub)gradient ascent to be performed on the utility function $u_i$ to find a (locally) optimal solution of \eqref{eq:tilde_vx}. This solution can then be leveraged to evaluate the exploitability of the PF mechanism according to \eqref{def:exploitability}.
Building on a similar technique, we present an efficient method for differentiating through \texttt{\textbf{RPF-Net}} in Section~\ref{sec:RPF_differentiation}.

The main property of $f^{PF}$ that we exploit to drive this innovation is the preservation of the KKT system of \eqref{def:PF} at the optimal solution under differential changes to values and demands. We now present the detailed technical development.

\subsection{Characterizing Sub-Differentiability of $f^{PF}$}\label{sec:differentiability}
Our goal is to determine how the differential changes in $v$, $x$, and $w$ affect the optimal solution of \eqref{def:PF}, which we denote by $a^{\star}(v,x,b)=f^{PF}(v,x,b)$ (or $a^{\star}$ for short when $v$, $x$, and $b$ are clear from the context). We start by writing the KKT equations of \eqref{def:PF} for stationarity, primal and dual feasibility, and complementary slackness
\begin{subequations}
\begin{align}
    & -\frac{w_i v_i}{v_i^{\top}a_i^{\star}}-\mu_i^{\star}+\nu_i^{\star}+\lambda^{\star}=0,\quad\forall i\label{eq:KKT:stationarity}\\
    &\mu_{i,m}^{\star} a_{i,m}^{\star}=0,\quad\forall i,m\label{eq:KKT:complementaryslack1}\\\
    &\nu_{i,m}^{\star} (a_{i,m}^{\star}-x_{i,m})=0,\quad\forall i,m\label{eq:KKT:complementaryslack2}\\
    &\lambda_m^{\star}(Da^{\star}-b)_m=0,\quad \forall m\label{eq:KKT:complementaryslack3},
\end{align}
\label{eq:KKT}
\end{subequations}
where $\mu^{\star},\nu^{\star},\lambda^{\star}$ are the optimal dual solutions associated with constraints $a\geq0$, $a\leq x$, and $Da\leq b$, respectively.
As $a^{\star}$ satisfies $v_i^{\top}a_i^{\star}>0$ for all $i$ (any allocation that makes $v_i^{\top}a_i^{\star}=0$ for any $i$ blows up the objective of \eqref{def:PF} to negative infinity and thus cannot be optimal), \eqref{eq:KKT:stationarity} can be re-written as
\begin{align}
(\mu_i^{\star}-\nu_i^{\star}-\lambda^{\star}) v_i^{\top} a_i^{\star}+w_i v_i=0.\label{eq:KKT:stationarity:rewrite}
\end{align}
For \eqref{eq:KKT} to hold when values, demands, and/or weights changes from $v,x,w$ to $v+dv,x+dx,w+dw$, the optimal primal and dual solutions need to adapt accordingly. We can describe the adaptation through the following system of equations on the differentials, which we obtain by differentiating \eqref{eq:KKT:stationarity:rewrite} and \eqref{eq:KKT:complementaryslack1}-\eqref{eq:KKT:complementaryslack3}.
\begin{align*}
\begin{aligned}
    &(d\mu_i^{\star}-d\nu_i^{\star}-d\lambda^{\star})v_i^{\top}a_i^{\star} + (\mu_i^{\star}-\nu_i^{\star}-\lambda^{\star}) dv_i^{\top}a_i^{\star}+ (\mu_i^{\star}-\nu_i^{\star}-\lambda^{\star})v_i^{\top}da_i^{\star} + w_i dv_i+dw_i v_i=0,\quad\forall i,\\
    &d\mu_{i,m}^{\star}a_{i,m}^{\star}+\mu_{i,m}^{\star}da_{i,m}^{\star}=0,\quad\forall i,m,\\
    &d\nu_{i,m}^{\star}(a_{i,m}^{\star}-x_{i,m})+\nu_{i,m}^{\star}(da_{i,m}^{\star}-dx_{i,m})=0,\quad\forall i,m,\\
    &d\lambda_m^{\star} D_m a^{\star}+\lambda_m^{\star} D_m da^{\star}-d\lambda_m^{\star} b_m=0,\quad\forall m.
\end{aligned}
\end{align*}

We can equivalently express this system of equations in a compact matrix form
\begin{align}
\mM \left[(da^{\star})^{\top}, (d\mu^{\star})^{\top}, (d\nu^{\star})^{\top},(d\lambda^{\star})^{\top}\right]^{\top} = \vh(dv,dx,dw),\label{eq:subdifferential_linear_system}
\end{align}
where $\mM\in\mathbb{R}^{(3NM+M)\times (3NM+M)}$ and $vh(dv,dx,dw)\in\mathbb{R}^{3NM+M}$ are
\begin{align*}
    \mM = \left[\begin{array}{cccc}
    \mM_1 & \mM_2 & -\mM_2 &-\mM_3 \\
    \operatorname{diag}(\mu^{\star}) & \operatorname{diag}(a^{\star}) & 0 & 0\\
    \operatorname{diag}(\nu^{\star}) & 0 & \operatorname{diag}(a^{\star}-x) & 0 \\
    \operatorname{diag}(\lambda^{\star})D & 0 & 0 & \operatorname{diag}(Da^{\star}-b)
    \end{array}\right]\quad\text{and}\quad\vh(dv,dx,dw)=\left[\begin{array}{c} \vc(dv,dw) \\
    \vct{0} \\
    \operatorname{diag}(\nu^{\star})dx\\
    \vct{0}
    \end{array}\right].
\end{align*}

The sub-matrices $\mM_1,\mM2\in\mathbb{R}^{NM\times NM},\mM_3\in\mathbb{R}^{NM\times M}$ and sub-vector $\vc\in\mathbb{R}^{NM}$ are 
\begin{align*}
\mM_1&=\left[\begin{array}{ccc}
(\mu_1^{\star}-\nu_1^{\star}-\lambda^{\star})v_1^{\top} & \cdots & 0 \\
\vdots & \ddots & \vdots \\
0 & \cdots & (\mu_N^{\star}-\nu_N^{\star}-\lambda^{\star})v_N^{\top}
\end{array}\right],\quad\mM_2=\left[\begin{array}{ccc}
v_1^{\top}a_1^{\star}I_{M\times M} & \cdots & 0 \\
\vdots & \ddots & \vdots \\
0 & \cdots & v_N^{\top} a_N^{\star} I_{M\times M}
\end{array}\right],\\
\mM_3&=\left[\begin{array}{cc}
v_1^{\top}a_1^{\star} I_{M\times M} \\
\vdots \\
v_N^{\top} a_N^{\star} I_{M\times M}
\end{array}\right],\quad
\vc(dv,dw)=-\left[\begin{array}{c}
\big((\mu_1^{\star}-\nu_1^{\star}-\lambda^{\star})(a_1^{\star})^{\top}+w_1 I_{M\times M}\big)dv_1 + v_1 dw_1\\
\big((\mu_2^{\star}-\nu_2^{\star}-\lambda^{\star})(a_2^{\star})^{\top}+w_2 I_{M\times M}\big)dv_2 + v_2 dw_2\\
\vdots\\
\hspace{-5pt}\big((\mu_N^{\star}-\nu_N^{\star}\hspace{-1pt}-\hspace{-1pt}\lambda^{\star})(a_N^{\star})^{\top}+\hspace{-1pt}w_N I_{M\times M}\big)dv_N + v_N dw_N\\
\end{array}\hspace{-5pt}\right].
\end{align*}

Under any differential changes $dv,dx,dw$, their impact on $da^{\star}$ can be derived by solving \eqref{eq:subdifferential_linear_system}, which by definition gives the (sub)gradients when $dv,dx,dw$ are set to proper identity matrices/tensors.

\begin{example}\label{example:differentiation}\normalfont
To compute $\frac{\partial a^{\star}}{\partial v_{1,1}}$, we evaluate how a unit change in $dv_{1,1}$ affects $da^{\star}$. If $\mM$ is invertible, we solve
\begin{align}
\left[(da^{\star})^{\top}, (d\mu^{\star})^{\top}, (d\nu^{\star})^{\top},(d\lambda^{\star})^{\top}\right]^{\top}=\mM^{-1} \vh(e_1,\vct{0},\vct{0})\label{eq:invert_M}
\end{align}
and extract $da^{\star}$ from the solution, where $e_1\in\mathbb{R}^{NM}$ denotes a vector with value 1 at the first entry and 0 otherwise. Under a general profile of values and demands, the matrix $\mM$ need not be invertible. A singular matrix $\mM$ leads to a non-differentiable mapping $f^{PF}$, but all solutions of 
\[\mM[(da^{\star})^{\top}, (d\mu^{\star})^{\top}, (d\nu^{\star})^{\top},(d\lambda^{\star})^{\top}]^{\top}=\vh(e_1,dx,dw)\]
are in the sub-differential of $a^{\star}$ at $v_{1,1}$. 
\end{example}
\begin{thm}\label{thm:PF_differentiable}
    Suppose that strict complementary slackness holds at solution $(a^{\star},\mu^{\star},\nu^{\star},\lambda^{\star})$ and that the demands are non-zero. When at least $NM-N$ inequality constraints in \eqref{def:PF} hold as equalities at $a^{\star}$, the matrix $\mM$ is invertible, and the mapping from $(v,x)$ to $a^{\star}$ is differentiable. Otherwise, the mapping is only sub-differentiable.
\end{thm}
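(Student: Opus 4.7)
The plan is to show invertibility of $\mM$ by using strict complementary slackness (SCS) to collapse most of the system into explicit coordinate equalities, thereby reducing the problem to a much smaller ``interior'' linear system whose invertibility is governed by the active-constraint count. First, I partition primal indices using SCS and the nonzero-demand assumption (which makes $a_{i,m}^\star = 0$ and $a_{i,m}^\star = x_{i,m}$ mutually exclusive) into $A = \{(i,m): a_{i,m}^\star = 0,\ \mu_{i,m}^\star > 0\}$, $B = \{(i,m): a_{i,m}^\star = x_{i,m},\ \nu_{i,m}^\star > 0\}$, and $C = \{(i,m): 0 < a_{i,m}^\star < x_{i,m},\ \mu_{i,m}^\star = \nu_{i,m}^\star = 0\}$; similarly partition budget indices into $E = \{m: \lambda_m^\star > 0\}$ and $D' = \{m: \lambda_m^\star = 0\}$. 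Reading off the three slackness blocks of $\mM$ immediately forces $da_{i,m} = 0$ on $A$, $da_{i,m} = dx_{i,m}$ on $B$, $d\mu_{i,m} = 0$ on $B \cup C$, $d\nu_{i,m} = 0$ on $A \cup C$, $d\lambda_m = 0$ on $D'$, and $(D\,da)_m = 0$ on $E$.

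Next, I substitute these simplifications into the stationarity block. For $(i,m) \in A$, the stationarity row solves explicitly for $d\mu_{i,m}$ in terms of the remaining unknowns, since the coefficient $\beta_i := v_i^\top a_i^\star > 0$ at any optimum (the objective of \eqref{def:PF} blows up whenever $\beta_i \to 0$); an analogous formula determines $d\nu_{i,m}$ on $B$. After these eliminations, invertibility of $\mM$ is equivalent to invertibility of a square system of size $|C| + |E|$ on the ``interior'' variables $\{da_{i,m}\}_{(i,m) \in C}$ and $\{d\lambda_m\}_{m \in E}$, consisting of $|C|$ stationarity equations $\lambda_m^\star \alpha_i + \beta_i\,d\lambda_m = 0$ for $(i,m) \in C$ (with $\alpha_i := \sum_{m':(i,m')\in C} v_{i,m'}\,da_{i,m'}$) together with $|E|$ budget equalities $\sum_{i:(i,m)\in C} da_{i,m} = 0$ for $m \in E$. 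The hypothesis ``at least $NM-N$ inequality constraints active'' rewrites as $|A| + |B| + |E| \geq NM - N$, i.e., $|C| \leq N + |E|$, which I claim is the count needed to avoid global under-determination. Concretely, I analyze the bipartite ``interior graph'' on vertex sets $\{i : C_i \neq \emptyset\} \sqcup E$ with edge set $C$: within each connected component the stationarity equations chain the ratios $\alpha_i/\beta_i$ and $d\lambda_m/\lambda_m^\star$ to a single component-scalar, and the budget equalities combined with the strict positivity of $\beta_i$ and of $\lambda_m^\star$ on $E$ (both guaranteed by SCS) force this scalar to zero, yielding $da = 0$ and $d\lambda = 0$.

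When the condition fails, the reduced system is under-determined and $\mM$ becomes singular. Every solution of \eqref{eq:subdifferential_linear_system} in that case still corresponds to a valid first-order KKT-preserving perturbation and therefore lies in the Clarke sub-differential of $f^{PF}$ at the base point $(v,x,w)$ (as illustrated by Example \ref{example:differentiation}), yielding sub-differentiability but not differentiability---exactly the alternative claimed by the theorem. The main technical obstacle is the connectivity/rank argument in the preceding step: although the dimension count $|C| + |E|$ matches that of the unknowns, establishing that the $|C|$ stationarity equations together with the $|E|$ budget equalities are linearly \emph{independent} (not merely equinumerous) requires a delicate sign argument exploiting SCS positivity, most naturally carried out by induction on the connected components of the interior graph, to rule out the cycle-induced kernels that would otherwise produce nontrivial solutions under the stated active-constraint budget.
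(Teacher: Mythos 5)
Your skeleton is the same as the paper's (Appendix~\ref{sec:proof:thm:PF_differentiable}): use strict complementary slackness to read the differentials $da$, $d\mu$, $d\nu$, $d\lambda$ off the three slackness row-blocks on the inactive index sets, eliminate the remaining $d\mu$, $d\nu$ on the active sets via the stationarity rows (the paper does this elimination in the opposite order, but it is the same reduction), and then argue that the residual system on the interior variables is square and nonsingular; your index sets $A,B,C,E$ are exactly the paper's $\Ical_\mu,\Ical_\nu,\Ical_\mu^c\cap\Ical_\nu^c,\Ical_\lambda$, and your count $|C|\le N+|E|$ is the paper's ``$N$ constraints from $\mM_1$ plus $NM-N$ tight constraints.''

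The genuine gap is the one you yourself flag and then do not close: linear independence of the reduced system. Two concrete problems. First, your claim that ``the budget equalities combined with the strict positivity of $\beta_i$ and of $\lambda_m^\star$ force the component scalar to zero'' is not provable from positivity alone. Within a component you correctly get $\alpha_i=t\beta_i$ and $d\lambda_m=-t\lambda_m^\star$, but to kill $t$ you must invoke the stationarity identity \eqref{eq:KKT:stationarity} on interior coordinates, $w_i v_{i,m}=\lambda_m^\star\,\beta_i$ for $(i,m)\in C$, which turns $\alpha_i$ into $(\beta_i/w_i)\sum_m\lambda_m^\star da_{i,m}$ so that summing against the budget rows yields $t\sum_i w_i=0$; without this identity a ``gains-from-trade'' perturbation (agents swapping resources they value differently while conserving each column sum) gives a nonzero solution with $t\neq 0$. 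Second, even with that identity, a connected component containing a cycle has $|C_{\mathrm{comp}}|\ge n+e$ edges against at most $n+e-1$ independent linear constraints on $da_{C_{\mathrm{comp}}}$ (one relation among the agent and budget rows is consumed by the $t=0$ argument), so the ``cycle-induced kernels'' you propose to rule out are in fact \emph{not} excluded by the hypothesis $|C|\le N+|E|$ — they occur precisely in the degenerate profiles where the active-constraint count is met with equality by a cyclic interior graph. This is also the soft spot of the paper's own proof, which asserts rather than proves the independence of the $NM$ constraints on $da$; your proposal makes the difficulty visible but does not resolve it. A further small omission: your count of ``$|C|$ stationarity equations'' tacitly requires every $(i,m)\in C$ to satisfy $m\in E$ (otherwise the row degenerates to $0=0$ and the interior system is rank-deficient by inspection); this does hold, but only because optimality of \eqref{def:PF} with $v_{i,m}\ge\underline{v}>0$ forces the budget of any resource with a strictly interior coordinate to bind, and that observation needs to be stated.
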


Theorem~\ref{thm:PF_differentiable} provides a sufficient condition for the differentiability of $f^{PF}$, and connects the differentiability to the number of tight constraints. The proof is presented in Appendix~\ref{sec:proof:thm:PF_differentiable}. 
%
Similar characterizations have been established for quadratic programs and disciplined parameterized programs \citep{amos2017optnet,agrawal2019differentiable}. However, \eqref{def:PF} does not fall (and cannot be re-formulated to fall) under either category. Hence, Theorem \ref{thm:PF_differentiable} generalizes these works to a broader category of convex programs; see Appendix~\ref{sec:dpp} for further discussions.

\subsection{Composing (Sub)Gradients}\label{sec:compute_grad}
In principle, one can find $\frac{\partial a^{\star}}{\partial v}$ (and also $\frac{\partial a^{\star}}{\partial x}$, $\frac{\partial a^{\star}}{\partial w}$ with an identical approach) by repeatedly solving \eqref{eq:invert_M} for the unit change in each entry of $v$. However, doing so would require solving a large system of equations for every partial derivative or inverting $\mM$ and storing its inverse, which is a costly operation.
Fortunately, the following proposition shows that if $a^{\star}((v_{i'},v_{-i}),(x_{i'},x_{-i}),b)$ is used downstream to compute the utility $u_i(a^{\star}((v_{i'},v_{-i}),(x_{i'},x_{-i}),b),v,x)$ in \eqref{eq:tilde_vx}, we can back-propagate the gradient 
$\nabla_{a^{\star}}u_i$
through the PF mechanism to obtain $\nabla_{v_{i'}}u_i$, $\nabla_{x_{i'}}u_i$, and $\nabla_{w}u_i$ by pre-computing and storing a matrix-vector product.

\begin{prop}\label{prop:grad}
Under the same conditions as Theorem~\ref{thm:PF_differentiable} and at least $NM-N$ inequality constraints of \eqref{def:PF} are tight at $a^{\star}((v_{i'},v_{-i}),(x_{i'},x_{-i}),b)$, then given $\nabla_{a^{\star}}u_i$, we have
\begin{align}
    &\begin{aligned}
    &\nabla_{v_{i'}}u_i =\big((\nu_i^{\star}+\lambda^{\star}-\mu_i^{\star})(a_i^{\star})^{\top}-w_i I_{M\times M}\big)g_{a,iM:(i+1)M},\\
    &\nabla_{x_{i'}}u_i =\diag(\nu^{\star})g_{\nu},\quad\nabla_{w_i}u_i=v_i^{\top}g_{a,iM:(i+1)M},
    \end{aligned}\label{eq:grad_vxw}\\
    \text{with }\,&\mM^{\top}\left[g_a^{\top},
    g_{\mu}^{\top}, g_{\nu}^{\top},
    g_{\lambda}^{\top} \right]^{\top} = \big[\big(\nabla_{a^{\star}}u_i\big)^{\top},
    \vct{0}^{\top},
    \vct{0}^{\top},
    \vct{0}^{\top} \big]^{\top}.
    \label{eq:M_singular}
\end{align}
\end{prop}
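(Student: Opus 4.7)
The plan is to derive the gradients by applying the chain rule to the composition $v'_i \mapsto a^\star(v'_i, \ldots) \mapsto u_i(a^\star, v, x)$ (and similarly for $x'_i$ and $w_i$), using the linear sensitivity system \eqref{eq:subdifferential_linear_system} as the source of the Jacobian of $a^\star$ with respect to the parameters. The naive route would be to solve \eqref{eq:subdifferential_linear_system} once for every entry of $v'_i$, $x'_i$, and $w_i$; the efficient route, which is the content of the proposition, is the adjoint (reverse-mode) trick: we instead solve a single linear system with $\mM^\top$ and recombine.

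First I would note that under the hypothesis of the proposition, Theorem~\ref{thm:PF_differentiable} guarantees that $\mM$ is invertible, so both $\mM^{-1}$ and $\mM^{-\top}$ exist. Next, I would read off from the structure of $\vh(dv,dx,dw)$ the three matrices that encode linear dependence of the RHS of \eqref{eq:subdifferential_linear_system} on $dv'_i$, $dx'_i$, and $dw_i$ respectively. Call these $B_v, B_x, B_w$. From the expression for $\vc$, the matrix $B_v$ has its only non-zero block $-\bigl[(\mu_i^\star-\nu_i^\star-\lambda^\star)(a_i^\star)^\top + w_i I\bigr]$ in the $a$-part, at rows $iM\!:\!(i{+}1)M$; from $\diag(\nu^\star)dx$, the matrix $B_x$ has only the diagonal block $\diag(\nu_i^\star)$ in the $\nu$-part at the same row range; from the $dw_i$ term of $\vc$, $B_w$ has only $-v_i$ in the $a$-part at rows $iM\!:\!(i{+}1)M$. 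Then the forward-mode Jacobians are $\partial a^\star/\partial v'_i = P_a \mM^{-1} B_v$ (and similarly for $x'_i$, $w_i$), where $P_a=[I_{NM},\,0,\,0,\,0]$ is the projection onto the $a$-block.

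With these identifications in hand, the chain rule gives
\begin{equation*}
\nabla_{v'_i} u_i = \bigl(P_a \mM^{-1} B_v\bigr)^\top \nabla_{a^\star} u_i = B_v^\top\, \mM^{-\top} P_a^\top\, \nabla_{a^\star} u_i = B_v^\top g,
\end{equation*}
where $g=[g_a^\top,g_\mu^\top,g_\nu^\top,g_\lambda^\top]^\top$ solves the adjoint system $\mM^\top g = [(\nabla_{a^\star}u_i)^\top,\vct 0^\top,\vct 0^\top,\vct 0^\top]^\top$, which is exactly \eqref{eq:M_singular}. The same $g$ then yields all three gradients in \eqref{eq:grad_vxw} by contracting $g$ against $B_v^\top$, $B_x^\top$, $B_w^\top$; because each $B$ is block-sparse, only the appropriate slice $g_{a,iM:(i+1)M}$ or $g_{\nu}$ (or $v_i^\top g_{a,iM:(i+1)M}$ for the scalar $w_i$-derivative) survives, giving the three explicit expressions in \eqref{eq:grad_vxw}.

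The main obstacle, and the part that I would verify most carefully, is the bookkeeping of transposes and block indexing in the step $B_v^\top g$. Because each $B$ is defined by an outer-product block (e.g.\ $(\mu_i^\star-\nu_i^\star-\lambda^\star)(a_i^\star)^\top$), it is easy to drop a transpose and to end up pairing $a_i^\star$ with the wrong vector; I would verify the contraction component-wise by writing $\partial u_i/\partial v'_{i,k} = \sum_{j,n}(\partial a^\star_{j,n}/\partial v'_{i,k})(\partial u_i/\partial a^\star_{j,n})$ and matching to the $k$-th entry of the stated formula. A secondary, but routine, obstacle is that the term in $\vc$ involving $dw_i$ is $-v_i\,dw_i$, so extracting $\nabla_{w_i}u_i$ reduces to the inner product $v_i^\top g_{a,iM:(i+1)M}$ (up to sign); one must keep careful track of the sign conventions used when \eqref{eq:subdifferential_linear_system} was assembled to confirm the stated expression. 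Modulo these computations, the proposition follows directly from the adjoint trick together with Theorem~\ref{thm:PF_differentiable}.
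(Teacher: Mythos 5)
Your proposal follows essentially the same route as the paper's proof in Appendix~C: both rest on the observation that $\frac{\partial u_i}{\partial \theta_k}=\vh(\cdot)^{\top}\mM^{-\top}\big[(\nabla_{a^{\star}}u_i)^{\top},\vct{0}^{\top},\vct{0}^{\top},\vct{0}^{\top}\big]^{\top}$ for every parameter entry, so one solves the single adjoint system \eqref{eq:M_singular} and contracts the result against the block-sparse dependence of $\vh$ on $dv_i'$, $dx_i'$, $dw_i$ (your $B_v,B_x,B_w$ are just a matrix-form packaging of the paper's entry-by-entry argument with $\vh(e_1,\vct 0,\vct 0)$). Your flagged concern about the transpose of the outer-product block is well placed --- carrying the contraction through literally yields $a_i^{\star}(\mu_i^{\star}-\nu_i^{\star}-\lambda^{\star})^{\top}$ rather than the $(\mu_i^{\star}-\nu_i^{\star}-\lambda^{\star})(a_i^{\star})^{\top}$ written in \eqref{eq:grad_vxw}, a bookkeeping point the paper's own proof does not resolve explicitly --- but this does not change the fact that the two arguments are the same.
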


If the assumptions in Prop.~\ref{prop:grad} do not hold, $\mM$ is singular, whereby  \eqref{eq:grad_vxw} defines subgradients of $u_i$, with $g_a$ and $g_{\nu}$ as any solution of \eqref{eq:M_singular}. In practice, we use the one with minimal $\ell_2$ norm.
%
These results enable performing (sub)gradient ascent to compute exploitability of the PF mechanism.

\subsection{RPF-Net Differentiation}\label{sec:RPF_differentiation}

We can repeat the steps in Sections~\ref{sec:differentiability} and \ref{sec:compute_grad} on the modified objective \eqref{eq:RPF-Net} and derive the gradient of a loss function $\ell$, computed from $\partial f_{\omega}^{RPF}(v,x,b)$, with respect to $z_{\omega}$ when we are given $\frac{\partial\ell}{\partial f_{\omega}^{RPF}(v,x,b)}$\footnote{The detailed derivation is deferred to Appendix~\ref{sec:RPF-Net_grad}.}. We need to solve the system of equations 
\[
(\mM')^{\top}\left[g_a^{\top},
    g_{\mu}^{\top}, g_{\nu}^{\top},
    g_{\lambda}^{\top} \right]^{\top} = \big[\big(\frac{\partial\ell}{\partial f_{\omega}^{RPF}(v,x,b)}\big)^{\top},
    \vct{0}^{\top},
    \vct{0}^{\top},
    \vct{0}^{\top} \big]^{\top}
\]
where the matrix $\mM'\in\mathbb{R}^{(3NM+M)\times(3NM+M)}$, defined in the appendix, shares common entries as $\mM$. Then, with $a^{\star}=f_{\omega}^{RPF}(v,x,b)$, we have
\[\Big(\frac{\partial\ell}{\partial [z_\omega(v,x,b)]_i}\Big)=v_i^{\top}a_i^{\star}g_{a,iM:(i+1)M}.\]

To evaluate the exploitability of the \texttt{\textbf{RPF-Net}} mechanism using gradient descent as outlined in Section~\ref{sec:differentiablePF}, we also need to find $\frac{\partial\ell}{\partial v_i}$, $\frac{\partial\ell}{\partial x}$, and possibly $\frac{\partial \ell}{\partial w_i}$. Their expressions are given as follows
\begin{align*}
    &\frac{\partial\ell}{\partial v_i}=\Big(-w_i I_{M\times M}-(\mu_i^{\star}-\nu_i^{\star}-\lambda^{\star}-z_i)(a_i^{\star})^{\top}\Big)g_{a,iM:(i+1)M},\quad\frac{\partial\ell}{\partial x}=\diag(\nu^{\star})g_{\nu},\quad\frac{\partial \ell}{\partial w_i}=v_i^{\top}g_{a,iM:(i+1)M},
\end{align*}
where $\mu^{\star},\nu^{\star},\lambda^{\star}$ are the optimal dual solutions of \eqref{eq:RPF-Net} associated with constraints $a\geq0$, $a\leq x$, and $Da\leq b$.

\section{Theoretical Guarantees}\label{sec:convergence}

In this section, we establish the convergence of the learned mechanism. First, we bound the sub-sampling error by a sublinear function of the batch size, which matches recent rates for auctions \citep{dütting2022optimal}, and ensures the objective of \eqref{eq:RPF_obj_ERM} converges to \eqref{eq:RPF-Net_train_obj} with enough data. Next, we show that the learned mechanism is robust under distribution mismatch, i.e., when we train \eqref{eq:RPF-Net_train_obj} on a distribution $F'$ and evaluate on distinct distribution $F$, the learned mechanism achieves low exploitability with performance determined by distributional distance between $F$ and $F'$. 

\subsection{Generalization Bounds}\label{sec:generalization_bounds}

For mechanism $f$, we define the generalization errors with $L$ samples as
\begin{align*}
    \varepsilon_{\logNSW}(f,L)&=-\mathbb{E}_{(v,x,b)_\sim F}[\logNSW(f,v,x,b)]+{\sum}_{l=1}^L\logNSW(f,v^l,x^l,b^l),\notag\\
    \varepsilon_{\mathbf{exp},i}(f,L)&=\mathbb{E}_{(v,x,b)_\sim F}[\exploitability_i(f,v,x,b)]-{\sum}_{l=1}^L\exploitability_i(f,v^l,x^l,b^l).
\end{align*}

With a slight abuse of notation, let~$u^{\omega}(v,x,b):= u(f^{\omega}(v,x,b),v,x)$ and let~$y \mapsto (v,x,b)$.

\begin{assumption} \label{ass:Lip}
For each agent~$i$, assume that~$\frac{1}{\psi} \leq  u^{\omega}_i(y) \leq \psi ,~\forall~y \in F$ and some $\psi>1$, and $v_i(S) \leq 1,~\forall~i$ and all subsets of resources.  
\end{assumption}

An implication of this assumption is that the activation function of \texttt{\textbf{RPF-Net}} is $\Phi-$Lipschitz for some~$\Phi > 0$.


\begin{thm}[Generalization Bound]\label{thm:GB}
Consider \texttt{\textbf{RPF-Net}} parameterized by a neural network with~$R$ hidden layers with ReLU activation and ~$K$ nodes per hidden layer.
Let~$d$ denote the total parameters with the vector of all model parameters $\| \omega\|_1 \leq \Omega$.
Under Assumption~\ref{ass:Lip}, the following holds with probability at least~$1-\delta$
\begin{align*}
\max\{\varepsilon_{\logNSW}(f_\omega^{RPF},L),\varepsilon_{\mathbf{exp},i}(f_\omega^{RPF},L)\}\leq \Ocal\Big( \psi N \frac{\sqrt{Rd \log( LN\Omega \Phi \max\{K,MN\})}}{L}+N\sqrt{\frac{\log(1/\delta)}{L}}\,\Big).
\end{align*}
\end{thm}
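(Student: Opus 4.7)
The plan is to adapt the Rademacher-complexity uniform-convergence argument of \citet{dütting2022optimal} from the RegretNet (auction) setting to \texttt{\textbf{RPF-Net}} in the payment-free setting, with the key new ingredient being a covering-number analysis that treats the optimization-program activation \eqref{eq:RPF-Net} as a $\Phi$-Lipschitz map sitting on top of a standard feed-forward ReLU network.

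First I would reduce the combined bound on $\max\{\varepsilon_{\logNSW},\varepsilon_{\mathbf{exp},i}\}$ to a single uniform-convergence problem. Under Assumption~\ref{ass:Lip}, the per-sample $\logNSW$ is bounded in magnitude by $N\log\psi$, and since utilities lie in $[1/\psi,\psi]$ the per-sample exploitability is bounded by $O(\psi N)$. Applying McDiarmid's inequality to each of the $N+1$ empirical averages and rescaling $\delta$ by $N+1$ via a union bound produces the additive $N\sqrt{\log(1/\delta)/L}$ term. It remains to control $\mathbb{E}\sup_{\omega}\big|\tfrac1L\sum_l g(f_\omega^{RPF},y^l)-\mathbb{E}[g(f_\omega^{RPF},y)]\big|$ for $g\in\{\logNSW,\exploitability_i\}$, which by standard symmetrization reduces to bounding the empirical Rademacher complexity $\hat{\mathcal{R}}_L(g\circ\mathcal{F}_\omega)$ of the induced class.

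Next I would bound that Rademacher complexity via Dudley's entropy integral combined with a composition argument on covering numbers. For the base network class $\{z_\omega:\|\omega\|_1\leq\Omega\}$ with $R$ hidden ReLU layers, $K$ nodes per layer, and $d$ total parameters, a standard peeling/chain-rule argument (e.g.\ Anthony--Bartlett, \citet{SSB14}) yields
\begin{equation*}
\log\mathcal{N}\big(\epsilon,\{z_\omega\},\|\cdot\|_\infty\big)\;\lesssim\;R d\log\!\big(K\Omega/\epsilon\big).
\end{equation*}
Because $f_\omega^{RPF}$ is obtained by feeding $z_\omega(v,x,b)$ through the $\Phi$-Lipschitz activation \eqref{eq:RPF-Net}, any $\epsilon$-cover of $\{z_\omega\}$ lifts to a $\Phi\epsilon$-cover of $\{f_\omega^{RPF}\}$ in the supremum norm. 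Composing once more with the $O(\psi^2 N)$-Lipschitz maps $a\mapsto \logNSW(a,v,x,b)$ and $f\mapsto\exploitability_i(f,v,x,b)$ and plugging into Dudley's bound assembles the advertised $\psi N\sqrt{Rd\log(LN\Omega\Phi\max\{K,MN\})}$ factor, with the remaining $L^{-1/2}$ rate coming from the chaining.

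The hard part will be verifying that the two composed maps really are Lipschitz with the claimed constants. For the activation $z\mapsto f_\omega^{RPF}$, the strategy is to invoke the KKT-perturbation calculus developed in Section~\ref{sec:RPF_differentiation}: whenever the matrix $\mM'$ is invertible with bounded inverse, $\partial f_\omega^{RPF}/\partial z$ is uniformly bounded on $\Acal_{b,x}$. Assumption~\ref{ass:Lip}, by forcing utilities into $[1/\psi,\psi]$, keeps the $\log(a_i^\top v_i)$ terms in \eqref{eq:RPF-Net} strongly convex on the active region and bounds the optimal dual variables, so that $(\mM')^{-1}$ has operator norm depending only on $(N,M,\psi)$; this yields the global Lipschitz constant $\Phi$. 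For exploitability, which is itself defined as an inner maximum over $(v_i',x_i')$, a Danskin/envelope-style argument shows that its Lipschitz constant as a functional of $f$ is inherited (up to a factor of two) from that of $u_i\circ f$, so the same cover transfers. Once both Lipschitz facts are in hand, the remainder is bookkeeping: optimizing the Dudley integral over the cover radius, tracking constants through the composition, and combining with the McDiarmid/union-bound step above to produce the stated high-probability bound.
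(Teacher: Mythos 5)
Your overall architecture is the same as the paper's: reduce both generalization errors to a uniform-convergence statement, control the relevant function classes by covering numbers of the mechanism class, lift a cover of the base ReLU network $\{z_\omega\}$ through the $\Phi$-Lipschitz optimization-layer activation (this is exactly Corollary~\ref{cor:Lip_opt}), and push the cover through the Lipschitz maps $f\mapsto\logNSW$ and $f\mapsto\exploitability_i$ (Propositions~\ref{thm:RC_nsw} and~\ref{prop2}). Two of your choices differ only cosmetically: the paper uses Massart's lemma (Lemma~\ref{lem:Mas}) at a single scale with an infimum over the cover radius rather than Dudley chaining, and it packages the concentration step into Lemma~\ref{lem:ERM} rather than invoking McDiarmid plus a union bound; both give the same $\Ocal(L^{-1/2})$ rate. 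For the exploitability class, the paper avoids your Danskin/envelope argument entirely and uses the elementary inequality $|\max_{y'}u-\max_{y'}\hat u|\le\max_{y'}|u-\hat u|$, which requires no smoothness of the inner maximization and is the more robust route.

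The one genuine problem is your plan for establishing that the activation $z\mapsto f^{RPF}_\omega$ is $\Phi$-Lipschitz by showing that $(\mM')^{-1}$ has uniformly bounded operator norm. Theorem~\ref{thm:PF_differentiable} (and its extension to $\mM'$) shows that this matrix is invertible only when at least $NM-N$ inequality constraints are tight and strict complementary slackness holds; on the complementary set of inputs the map is merely sub-differentiable and $\mM'$ is singular, so no uniform bound on $\|(\mM')^{-1}\|$ exists and the proposed argument breaks down there (a global Lipschitz constant could in principle still be extracted from bounded subgradients, but that is not what you sketched). The paper does not attempt this at all: it takes the $\Phi$-Lipschitzness of the activation as an implication of Assumption~\ref{ass:Lip} and carries $\Phi$ through the bound as a given constant. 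So your proof of the theorem as stated goes through once you either drop that sub-argument and treat $\Phi$ as assumed, as the paper does, or replace it with an argument that handles the singular locus.
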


The result provides that the generalization error decays at rate $\Ocal(L^{-1/2})$ where $L$ is the training sample size.
In auction design with payment, RegretNet, the state-of-the-art learned mechanism proposed in \citet{dütting2022optimal}, also achieves a $\Ocal(L^{-1/2})$ rate, which we match in the non-payment setting. The proof of Theorem~\ref{thm:GB} is inspired by \citet{dütting2022optimal}, but extended to handle: i) the fairness objective (NSW) in the absence of payment; ii) the special activation function in \texttt{\textbf{RPF-Net}}. 
The arguments in the proof can be used to obtain a similar order on the generalization error for the ExS-Net architecture of~\citet{zeng2023learning} as well.

\subsection{Robustness to Distribution Mismatch}\label{sec:distribution_shifts}
Let $\omega^{\star}(F)$ denote an optimizer of \eqref{eq:RPF_obj_ERM} under samples $\{(v^l,x^l,b^l)\}_l$ drawn from the distribution $F$. 
 Next, we establish that the performance of $f_{\omega^{\star}(F')}$ on samples from  distribution $F$ when trained on samples from a different distribution $F'$ in terms of the worst-case exploitability is controlled by the degree of mismatch.
%
%
\begin{thm}\label{lem:dist_mismatch}
    Suppose the mechanism $f_\omega^{RPF}$ with parameter $\omega$ is $\epsilon$-incentive compatible over distribution $F'$. We have for any $i=1,\cdots,N$
    \begin{align*}
        E_{(v,x,b)\sim F}[\exploitability_i(f,v,x,b)]\leq\epsilon + 2M\,\overline{v}\,\overline{x}\,d_{TV}(F,F'),
    \end{align*}
    where $d_{TV}$ denotes the total variation (TV) distance.
\end{thm}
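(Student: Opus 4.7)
The plan is to treat the per-instance exploitability as a bounded measurable function of $(v,x,b)$ and then invoke the standard inequality relating expectations under two probability measures to their total-variation distance. No properties of the learned network $z_\omega$ enter the argument beyond those already baked into the definition of $\exploitability_i$.

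First I would establish a uniform pointwise bound on $\exploitability_i(f_\omega^{RPF},v,x,b)$. From the utility definition \eqref{def:utility}, any feasible allocation $a\in\Acal_{b,x}$ satisfies $\min\{a_{i,m},x_{i,m}\}\leq x_{i,m}\leq \overline{x}$ and each value component is bounded by $\overline{v}$, so $0\leq u_i(a,v,x)\leq M\,\overline{v}\,\overline{x}$ for every feasible $a$. Since the exploitability is a difference of two such utilities, and is non-negative because the truthful report $(v_i,x_i)$ is always a valid choice in the inner maximization of \eqref{def:exploitability}, we conclude $0\leq \exploitability_i(f_\omega^{RPF},v,x,b)\leq M\,\overline{v}\,\overline{x}$ uniformly over the support of $F\cup F'$.

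Next I would invoke the standard transfer inequality $|E_P[h]-E_Q[h]|\leq 2\|h\|_\infty\, d_{TV}(P,Q)$, which follows immediately from $|\int h\,d(P-Q)|\leq \|h\|_\infty \int |p-q|\,d\mu = 2\|h\|_\infty d_{TV}(P,Q)$ under the conventional $d_{TV}=\tfrac{1}{2}\|P-Q\|_1$. Applying this with $h(v,x,b)=\exploitability_i(f_\omega^{RPF},v,x,b)$ and the pointwise bound above yields
\[
\bigl|\,E_F[\exploitability_i(f_\omega^{RPF},v,x,b)]-E_{F'}[\exploitability_i(f_\omega^{RPF},v,x,b)]\,\bigr|\leq 2M\,\overline{v}\,\overline{x}\,d_{TV}(F,F').
\]
Combining with the assumed $\epsilon$-IC condition $E_{F'}[\exploitability_i(f_\omega^{RPF},v,x,b)]\leq \epsilon$ and the triangle inequality gives the stated bound.

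The proof is conceptually short; the only subtlety is recognizing that the exploitability, although defined via an inner maximization and a complicated implicit mapping through the \texttt{\textbf{RPF-Net}} activation and optimization program \eqref{eq:RPF-Net}, is ultimately a uniformly bounded real-valued function of $(v,x,b)$, so the TV transfer goes through mechanically without any further regularity assumption on $z_\omega$ (continuity suffices for measurability). The main place to exercise care is matching the convention for $d_{TV}$: with the half-$L_1$ definition the prefactor $2$ naturally arises, whereas under the $\sup_A|P(A)-Q(A)|$ convention the same factor appears through the cruder bound $\sup h-\inf h\leq 2\|h\|_\infty$ rather than the tighter range of $h$.
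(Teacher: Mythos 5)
Your proposal is correct and follows essentially the same route as the paper's proof: bound the per-instance exploitability pointwise by $M\,\overline{v}\,\overline{x}$ (the utility under fully satisfied demand), then split $E_F = E_{F'} + (E_F - E_{F'})$, control the first term by the $\epsilon$-IC hypothesis and the second by the $L_1$ distance between densities, which is $2d_{TV}(F,F')$. The paper writes out the integral manipulation explicitly rather than citing the transfer inequality as a black box, but the content is identical.
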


As an implication of Theorem~\ref{lem:dist_mismatch}, we can train \texttt{\textbf{RPF-Net}} with samples from distribution $F'$ and expect them to perform well under distribution $F$, provided that the mismatch between $F$ and $F'$ is not arbitrarily large. Later in Sec.~\ref{sec:exp:dist_mismatch}, we will verify this result through numerical simulations.
The proof of Theorem~\ref{lem:dist_mismatch} is presented in Appendix~\ref{sec:dist_mismatch}.



\section{Experiments}\label{sec:simulation}
We numerically evaluate \texttt{\textbf{RPF-Net}} 
i) as the number of agents and resources changes; and ii) when the mechanisms are tested on a distribution different from that observed during training. We also visualize the decision boundary of \texttt{\textbf{RPF-Net}} in contrast to the PF mechanism to provide more insight on how \texttt{\textbf{RPF-Net}} deviates from the PF mechanism which it is designed to approximate and enhance. The final subsection discusses the computational time required for training and performing inference with \texttt{\textbf{RPF-Net}}.

\begin{figure*}[!ht]
  \centering
  \includegraphics[width=\linewidth]{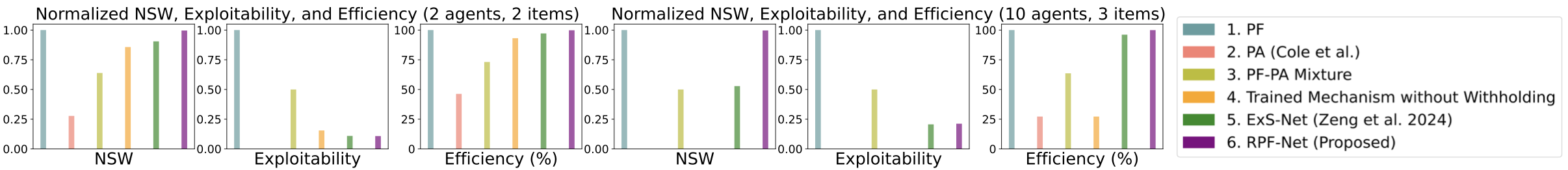}
  \caption{Mechanism performance in 2x2 and 10x3 systems (normalized with respect to PF mechanism)}
  \label{fig:2x2_10x3}
\end{figure*}

\noindent\textbf{Evaluation Metrics:} Mechanisms are evaluated on NSW \eqref{def:NSW}, exploitability 
\[\exploitability(f,v,x,b) =(1/N)\sum_{i=1}^{N}\exploitability_i(f,v,x,b),\]
and efficiency
\[\efficiency(f,v,x,b) = \frac{\sum_{m=1}^{M}\sum_{i=1}^{N}f_{i,m}(v,x,b)}{\sum_{m=1}^{M}b_m}.\]

Efficiency quantifies the averaged utilization of resources. Mechanisms should preferably have a high efficiency, although efficiency is merely a byproduct of NSW in the training objective. As the total demand for a resource may be smaller than the budget, the maximum efficiency can be smaller than 1. The PF mechanism is fully efficient (i.e., all available resources are allocated if the demand is sufficiently large), and hence serves as a benchmark.

\noindent\textbf{Baselines:} Besides the PF mechanism introduced in \eqref{def:PF}, we evaluate against the PA mechanism \citep{cole2013mechanism}, denoted by $f^{PA}$, which achieves state-of-the-art NSW among IC mechanisms. 
A probabilistic mixture of PA and PF provides a strong trade-off between NSW and exploitability. Given $\rho\in[0,1]$, we consider a new mechanism $f^{\text{mixture}}$:
\begin{align*}
r\sim\text{Bern}(\rho),\, f^{\text{mixture}}(v,x,b)= \begin{cases} 
f^{PF}(v,x,b), & \text{if }r=1  \\
f^{PA}(v,x,b), & \text{if }r=0
\end{cases}
\end{align*}

Varying $\rho$ between $[0,1]$ linearly interpolates PF and PA in expectation. We set $\rho=1/2$ in the experiments.

Another important baseline is the ExS-Net mechanism proposed in \citet{zeng2023learning} which shares the same training objective as \texttt{\textbf{RPF-Net}} and is different in the neural network activation function.

We also compare against a standard neural network, which has the same architecture as \texttt{\textbf{RPF-Net}} and ExS-Net except that the activation function of its last layer is a standard softmax function and is trained on the same objective for the same amount of iterations. Any performance gain of \texttt{\textbf{RPF-Net}} over ExS-Net and this standard network can be attributed to the architectural innovation.

\noindent\textbf{Data Generation:} In all experiments, the true values and demands follow uniform and Bernoulli uniform distributions, respectively, within the range $[0.1, 1]$. Specifically, we generate the test samples according to
\begin{gather}
v_{i,m}\sim\text{Unif}(0.1,1)\;,\widetilde{x}_{i,m}\sim\text{Unif}(0.1,1)\; ,\label{eq:data_generation}\\
 \widehat{x}_{i,m}\sim\text{Bern}(0.5)\;,   x_{i,m}=\widetilde{x}_{i,m}\widehat{x}_{i,m}.\notag
\end{gather}

Unless noted otherwise in Sec.~\ref{sec:exp:scalingsystem}, the budget for each resource is set to $\frac{N}{2}$, where $N$ is the number of agents. This creates moderate competition for the resources in expectation.  Training data is sampled from the same distributions as test data according to \eqref{eq:data_generation} except in Sec.~\ref{sec:exp:dist_mismatch} which studies distribution mismatch.
All agent weights are set to 1.

\subsection{Varying System Parameters}\label{sec:exp:scalingsystem}

We test the proposed mechanism as the problem dimension varies. We start with two small-scale problems with 2 agents, 2 resources, and 10 agents, 3 resources.
The 2x2 system is the smallest non-trivial case, and the 10x3 system is the largest considered in a line of recent works \citep{dütting2022optimal,ivanov2022optimal} on auction design with payment. 
All reported numbers are normalized by that of the PF mechanism.
As shown in Figure~\ref{fig:2x2_10x3}, \texttt{\textbf{RPF-Net}} achieves an advantageous trade-off between PF and PA: it consistently reduces the exploitability of PF by over at least 80\% while achieving similar NSW. Compared with PA, \texttt{\textbf{RPF-Net}} improves the efficiency and NSW. For larger numbers of agents, the NSW of PA mechanism decreases drastically being the product of agents' utilities, while \texttt{\textbf{RPF-Net}} still exhibits a favorable NSW. In addition, \texttt{\textbf{RPF-Net}} outperforms the interpolated mixture of PF and PA and ExS-Net under all three metrics.
The performance of the standard neural network is unstable (mechanism 4 in orange) -- while it performs well in the 2x2 system, it fails to achieve a meaningful NSW with 10 agents present.

\subsection{Distribution Mismatch}\label{sec:exp:dist_mismatch}
It can be difficult in practical problems to know and sample from the true distribution of values, demands, and budgets in the test set. In this section, we show that \texttt{\textbf{RPF-Net}} is still effective when the distribution on which they are expected to perform is slightly different from the training distribution. Such distribution mismatch may occur due to measurement error or more pernicious sources such as strategically misrepresenting preferences. 

Recall that $F$ denotes the true distribution of utility function parameters and $F'$ the distribution of the training samples. We denote by $\omega^{\star}(F')$ the optimal solution to \eqref{eq:RPF_obj_ERM} under samples $\{(v^l,x^l,b^l)\sim F'\}$. 
The performance of $\omega^{\star}(F')$ under the true distribution $F$, measured by $\mathbb{E}_{ F}[\NSW(\omega^{\star}(F'),v,x,b)]$ and $\mathbb{E}_{F}[\exploitability(\omega^{\star}(F'),v,x,b)]$, can in theory be sub-optimal by a factor of $d_{TV}(F,F')$, which is non-ideal under a large discrepancy between $F$ and $F'$.
Nonetheless, empirically we observe robustness: we study two sources of distribution mismatch and see in the experiments that $\mathbb{E}_{F}[\NSW(\omega^{\star}(F'),v,x,b)]$ and $\mathbb{E}_{F}[\exploitability(\omega^{\star}(F'),v,x,b)]$ closely match $\mathbb{E}_{F}[\NSW(\omega^{\star}(F),v,x,b)]$ and $\mathbb{E}_{F}[\exploitability(\omega^{\star}(F),v,x,b)]$.

\noindent\textbf{Randomly perturbed training samples.}
In a $2\times 2$ system we suppose that the distribution $F$ follows \eqref{eq:data_generation} while each training sample $\{(v^l,x^l)\}_l$ is generated according to
\begin{align}
\begin{aligned}
    &\bar{v}_{i,m}^l \sim \text{Unif}(0.1,1),\,\, \bar{x}_{i,m}^l \sim \text{Unif}(0.1,1).\\
    &\tilde{v}_{i,m}^l\sim\text{Cauchy}(0,0.01), \,\,\tilde{x}_{i,m}^l\sim\text{Cauchy}(0,0.01),\\
    &v_{i,m}^l=[\bar{v}_{i,m}^l+\tilde{v}_{i,m}^l]_{[0.1,1]},\\
    &\widehat{x}_{i,m}\!\sim\!\text{Bern}(0.5), \, x_{i,m}^l=\widehat{x}_{i,m}[\bar{x}_{i,m}^l+\tilde{x}_{i,m}^l]_{[0.1,1]}.
\end{aligned}
\label{eq:dist_mismatch_random}
\end{align}
Here $[\cdot]_{[a,b]}$ for scalar $a,b$ denotes the element-wise projection to the interval $[a,b]$. In other words, $F'$ has a perturbed uniform distribution. 
We would like to draw the perturbation from a heavy-tailed distribution to increase the likelihood of generating extreme values. 
The Cauchy distribution has been used to explain extreme events like Flash Crash \citep{parker2016flash} and to describe price fluctuations \citep{casault2011use}, which motivates our selection. 
\\
\noindent\textbf{Adversarially generated training samples.}
Again, we consider a $2 \times 2$ system where the true distribution $F$ is described in \eqref{eq:data_generation}.
Suppose that the training dataset is composed of historical valuations and demands collected through past interactions of the agents with the supplier. If the agents believe that the supplier runs the PF mechanism in these past interactions, they may have reported strategically to ``trick'' the PF mechanism. The training samples $\{(v^l,x^l)\}_l$ will then take the following form:
\begin{align}
\begin{aligned}
    &\bar{v}_{i,m}^l \sim \text{Unif}(0.1,1),\quad \bar{x}_{i,m}^l \sim \text{Unif}(0.1,1).\\
    &v_{i}^l,x_{i}^l=\arg\max_{v_i',x_i'}u_i(f^{PF}((\hspace{-1pt}v_i',\bar{v}_{-i}^l),(x_i',\bar{x}_{-i}^l),b),\bar{v}^l,\bar{x}^l).
\end{aligned}
\label{eq:dist_mismatch_adversarial}
\end{align}

In Table~\ref{table:misreport_metrics} and Figure~\ref{fig:distribution_mismatch} we present the NSW, exploitability, and efficiency of \texttt{\textbf{RPF-Net}} trained under contaminated samples \eqref{eq:dist_mismatch_random} and \eqref{eq:dist_mismatch_adversarial}. We observe that under contaminated training samples the metrics of \texttt{\textbf{RPF-Net}} closely track those under truthful training data. Notably, the NSW and efficiency are almost unaffected, while the exploitability only mildly increases. The proposed mechanism still achieves near-optimal NSW and efficiency while maintaining a low exploitability (not exceeding $20\%$ of that of the PF mechanism). 
This demonstrates the robustness of the proposed mechanism to mismatch between training and inference distribution and supports the theoretical results in Sec.~\ref{sec:distribution_shifts}.

\begin{table*}[!h]
\centering
\setlength{\tabcolsep}{5pt}
\begin{tabular}{cccc}
\toprule
\makecell{Mechanism}  &  \makecell{NSW} & \makecell{Exploitability} & \makecell{Efficiency (\%)} \\
        \midrule
        Proportional Fairness Mechanism & 8.00e-2$\pm$1.5e-2 & 3.70e-3$\pm$1.2e-3 & 56.6$\pm$3.9 \\
        Partial Allocation Mechanism & 2.22e-2$\pm$2.1e-3 & 0$\pm$0 & 26.2$\pm$2.5 \\
        PF-PA Mixture & 5.11e-2$\pm$6.6e-3 & 1.85e-3$\pm$6.2e-4 & 41.4$\pm$3.6 \\
        \texttt{\textbf{RPF-Net}} (Trained on Truthful Data)  & 7.97e-2$\pm$1.4e-2 & 3.99e-4$\pm$1.2e-4 & 56.5$\pm$4.0 \\
        \texttt{\textbf{RPF-Net}} (Trained on Randomly Perturbed Data)  & 7.99e-2$\pm$1.5e-2 & 5.3e-4$\pm$2.4e-4 & 56.5$\pm$4.0 \\
        \texttt{\textbf{RPF-Net}} (Trained on Adversarial Data)  & 7.98e-2$\pm$1.5e-2 & 7.4e-4$\pm$8.0e-5 & 56.5$\pm$3.9 \\
        \bottomrule
        \bottomrule
        \end{tabular}
\caption{Performance of \texttt{\textbf{RPF-Net}} trained under data containing untruthful report. Mean and standard deviation reported.}
\label{table:misreport_metrics}
\end{table*}

\begin{figure}
\centering
\begin{minipage}{.45\textwidth}
  \centering
  \includegraphics[width=.9\linewidth]{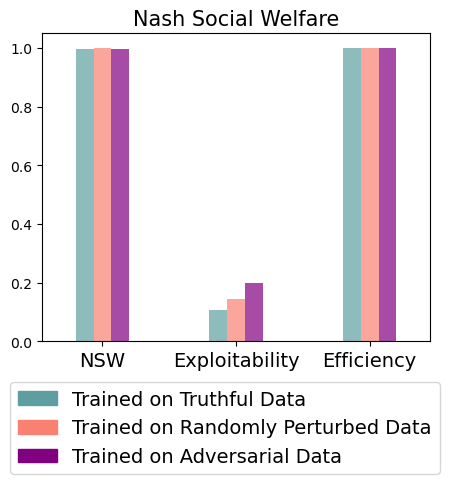}
  \caption{Performance of \texttt{\textbf{RPF-Net}} trained under data containing untruthful report. Numbers normalized by NSW/exploitability/efficiency of PF mechanism.}
  \label{fig:distribution_mismatch}
\end{minipage}
\begin{minipage}{.05\textwidth}
\hspace{0pt}
\end{minipage}
\begin{minipage}{.48\textwidth}
  \includegraphics[width=\linewidth]{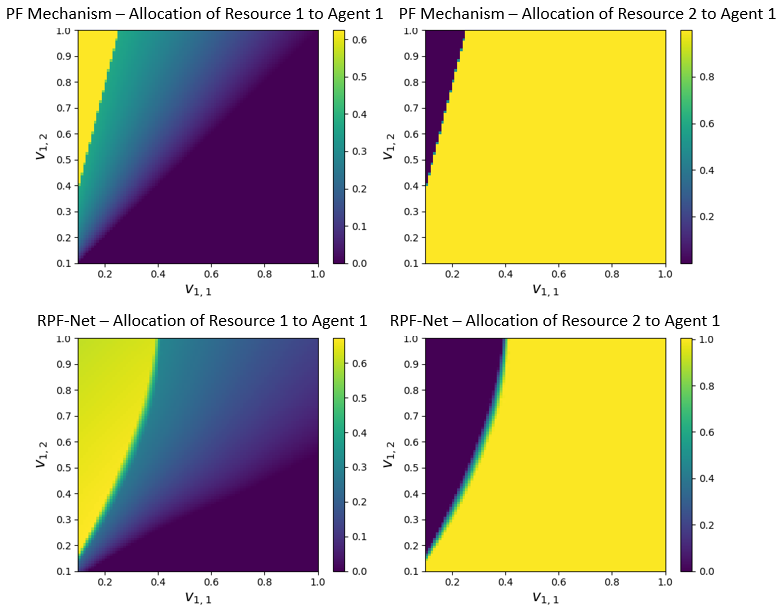}
  \caption{Allocations from PF (top) and \texttt{\textbf{RPF-Net}} (bottom).}
 \label{fig:heatmap_combined}
\end{minipage}
\end{figure}

\subsection{Visualizing Allocation Under RPF-Net}

To shed more light on \texttt{\textbf{RPF-Net}}, we visualize and compare the allocation vectors generated by PF and well-trained \texttt{\textbf{RPF-Net}} on a 2-agent 2-resource allocation problem. Under a specific profile of preferences and with the reporting of agent 2 fixed, Figure~\ref{fig:heatmap_combined} plots the allocation to agent 1 as $v_{1,1}$ and $v_{1,2}$ vary at the same time. PF has clear and sharp decision boundaries. As an interesting observation, the learned \texttt{\textbf{RPF-Net}} mechanism closely tracks the boundary of PF but determines allocations in a smoother way.

\subsection{Complexity of Proposed Mechanism in Training and Inference Phase}\label{sec:complexity}

In this section, we provide more light on the amount of computation required to train \texttt{\textbf{RPF-Net}} and use it for inference. For this purpose, it is worth comparing with the existing mechanism ExS-Net proposed in \citet{zeng2023learning} which considers the same training objective as ours but parameterizes the mechanism with a much simpler regular neural network architecture.
The main difference between \texttt{\textbf{RPF-Net}} and ExS-Net \citep{zeng2023learning} lies in the activation function. In the case of ExS-Net, the activation involves standard scalar/vector operation and a softmax function. Both forward and backward passes through the activation can be completed in time $\Ocal(NM)$, which means that ExS-Net is fast in both training and inference phases. For \texttt{\textbf{RPF-Net}}, forward pass requires solving the convex optimization program \eqref{eq:RPF-Net}. We use an interior-point solver for this program, which is guaranteed to converge within polynomial time, i.e., the time to obtain an solution up to precision $\varepsilon$ is no more than some polynomial function of $\varepsilon$, $N$, and $M$. However, the exact complexity is unknown but should be expected to be worse than $\Ocal((NM)^3)$ \citep{renegar1988polynomial} (as $\Ocal((NM)^3)$ is the time it would take for the interior-point method to converge if \eqref{eq:RPF-Net} were a linear program). In the backward pass, \texttt{\textbf{RPF-Net}} needs to invert solve a system of equations of dimension $(3NM+M)\times(3NM+M)$, which requires $\Ocal((NM)^3)$ computation. To summarize, we organize the complexity results in Table~\ref{table:time}.

We also show in Table~\ref{table:time} the training and inference time on a ten-agent three-resource allocation problem. Note that both PF and PA are hand-designed mechanisms that do not require training, but are time-consuming during inference since they solve optimization programs. The amount of computation required by \texttt{\textbf{RPF-Net}} during inference is on the same order as that of PF and PA mechanisms.

\begin{table}[!h]
\centering
\setlength{\tabcolsep}{5pt}
\begin{tabular}{ccccc}
\toprule
\makecell{Mechanism}  &  \makecell{Training Time \\(Theory)} & \makecell{Inference Time \\(Theory)} &  \makecell{Training Time \\(Simulation)} & \makecell{Inference Time \\(Simulation)} \\
        \midrule
        Proportional Fairness & 0 & At least $\Ocal((MN)^3)$ & 0 & 40.1 \\
        Partial Allocation & 0 & At least $\Ocal((MN)^3)$ & 0 & 310.0 \\
        ExS-Net & $\Ocal(MN)$ & $\Ocal(MN)$ & 1 & 1 \\
        \texttt{\textbf{RPF-Net}} (Proposed) & At least $\Ocal((MN)^3)$ & At least $\Ocal((MN)^3)$ & 380.2 & 114.5 \\
        \bottomrule
        \bottomrule
        \end{tabular}
\vspace{-5pt}
\caption{Training and inference time for a ten-agent three-resource problem with samples generated according to \eqref{eq:data_generation}. Reported training time and inference time of all mechanisms are normalized with respect to those of ExS-Net.}
\label{table:time}
\end{table}

\section{Conclusion}\label{sec:conclusion}
This paper studied the important problem of fair and incentive compatible resource allocation without the use of monetary payment. 
Identifying that hand-designing mechanisms that achieve the exact IC and maximum NSW is impossible, we considered learning an approximate mechanism that desirably trades off the two competing objectives. As a key contribution, we innovated the neural network architecture used to parameterize the mechanism -- we proposed \texttt{\textbf{RPF-Net}}, a learned variant of the standard PF mechanism with an optimization-based activation function in the output layer. We established a way of differentiating through the activation function to enable efficient training. 
On the theoretical side, we showed that the proposed mechanism is consistent (learns better with more training data) and is robust to distributional changes in the data, two properties that make the mechanism useful in practice.
We showed through numerical simulations that \texttt{\textbf{RPF-Net}} outperforms the existing methods on a range of evaluation criteria.
In particular, compared to the state-of-the-art architecture \citep{zeng2023learning}, \texttt{\textbf{RPF-Net}} significantly increases the NSW without incurring a higher exploitability, thereby improving the Pareto frontier. 

It is worth noting the limitation of \texttt{\textbf{RPF-Net}} in its computational complexity. The optimization-based activation function requires significant amount of computation to evaluate in the forward direction and to differentiate through in the backward direction. Enhancing the computational complexity of the architecture while not compromising performance is an important future work.

\section*{Disclaimer}
This paper was prepared for informational purposes [“in part” if the work is collaborative with external partners]  by the Artificial Intelligence Research group of JPMorgan Chase \& Co. and its affiliates ("JP Morgan'') and is not a product of the Research Department of JP Morgan. JP Morgan makes no representation and warranty whatsoever and disclaims all liability, for the completeness, accuracy or reliability of the information contained herein. This document is not intended as investment research or investment advice, or a recommendation, offer or solicitation for the purchase or sale of any security, financial instrument, financial product or service, or to be used in any way for evaluating the merits of participating in any transaction, and shall not constitute a solicitation under any jurisdiction or to any person, if such solicitation under such jurisdiction or to such person would be unlawful.

\bibliographystyle{tmlr}
\bibliography{references}

\clearpage

\appendix

\appendix
\tableofcontents

\section{Partial Allocation Mechanism}\label{sec:PA}

The partial allocation mechanism, proposed in \citet{cole2013mechanism}, is built upon the PF mechanism and withholds resources according to an externality ratio. Specifically, given reported valuations $v\in\mathbb{R}^{N\times M}$ and demands $d\in\mathbb{R}^{N\times M}$ across all agents and resources and budgets $b\in\mathbb{R}^M$, let $a^\star\triangleq f^{PF}(v,x,b)$ denote the allocation made by the PF mechanism.

Let $x^{-i}\in\mathbb{R}^{N\times M}$ denote a modified demand matrix where
\begin{align*}
x^{-i}_{j}=
\begin{cases}
    x_j\in\mathbb{R}^M, & i\neq j\\
    0 \in\mathbb{R}^M, & i= j
\end{cases}
\end{align*}

For each agent $i$, we calculate
\[a^{-i,\star}\triangleq f^{PF}(v,x^{-i},b),\]
which is the PF allocation outcome that would arise in the absence of agent $i$. We further compute the externality ratio $r_i\in\mathbb{R}_+$ for agent $i$ as
\[r_i=\left(\frac{\prod_{j\neq i}u_j(a^{\star},v,x)^{w_j}}{\prod_{j\neq i}u_j(a^{-i,\star},v,x)^{w_j}}\right)^{1/w_i},\]
where $u_i$ is the utility function defined in \eqref{def:utility}.

The PA mechanism allocates to agent $i$ according to the following rule
\[f_i^{PA}(v,x,b)=r_i a_i^{\star}.\]

It is easy to verify that $r_i\in[0,1]$. Therefore, the PA mechanism is always feasible. \citet{cole2013mechanism}[Theorem 3.2] proves that it is also perfectly incentive compatible for any $v,x,b$.

\section{Proof of Theorem~\ref{thm:PF_differentiable}}\label{sec:proof:thm:PF_differentiable}

To show the invertibility of $\mM$ when at least $NM-N$ constraints are tight, it is equivalent to show that the following system of equations has a unique solution
$[da^{\top},d\mu^{\top},d\nu^{\top},d\lambda^{\top}]^{\top}$ for any vector $[\alpha^{\top},\beta^{\top},\gamma^{\top},\zeta^{\top}]^{\top}$
\begin{align*}
    \mM \left[\begin{array}{c} da \\ d\mu \\ d\nu \\ d\lambda\end{array}\right]=\left[\begin{array}{c} \alpha \\ \beta \\ \gamma \\ \zeta\end{array}\right].
\end{align*}
We define $\Ical_{\mu}\triangleq \{(i,m)\in[1,\cdots,N]\times[1,\cdots,M]\mid \mu_{i,m}^{\star}>0\}=\{(i,m)\in[1,\cdots,N]\times[1,\cdots,M]\mid a_{i,m}^{\star}=0\}$, where the second equality is due to strict complementary slackness. Similarly, we define $\Ical_{\nu}\triangleq \{(i,m)\in[1,\cdots,N]\times[1,\cdots,M]\mid \nu_{i,m}^{\star}>0\}=\{(i,m)\in[1,\cdots,N]\times[1,\cdots,M]\mid a_{i,m}^{\star}=x_{i,m}\}$, and $I_{\lambda}=\{m\in[1,\cdots,M]:\lambda_m>0\}=\{m\in[1,\cdots,M]:\sum_{i}a_{i,m}^{\star}=b_m\}$. We use $\Ical_{\mu}^{c}$, $\Ical_{\nu}^{c}$, $\Ical_{\lambda}^{c}$ to denote their complement sets.

Note that for any $(i,m)\notin\Ical_{\mu}$, its corresponding row in the second row block of M only has one non-zero entry $a^{\star}_{i,m}$ in the second column block of $M$. This implies the solution $d\mu_{i,m}=\frac{\beta_{i,m}}{a^{\star}_{i,m}}$. Similarly, for any $(i,m)\notin\Ical_{\nu}$ and $m\notin\Ical_{\lambda}$, we are able to immediately write the solutions $d\nu_{i,m}$ and $d\lambda_{m}$ and remove them from the system of equations.
Eventually, the system to be solved reduces to
\begin{align*}
    &\left[\hspace{-2pt}\begin{array}{cccc}
    \mM_1 & \hspace{-10pt}(\mM_2)_{:,\Ical_{\mu}}\hspace{-10pt} & \hspace{-10pt}-(\mM_2)_{:,\Ical_{\nu}}\hspace{-10pt} & \hspace{-10pt}-(\mM_3)_{:,\Ical_{\lambda}} \\
    \operatorname{diag}(\mu^{\star}_{\Ical_{\mu}}) & \hspace{-10pt}\operatorname{diag}(a^{\star}_{\Ical_{\mu}})\hspace{-10pt} & \hspace{-10pt}0\hspace{-10pt} & \hspace{-10pt}0\\
    \operatorname{diag}(\nu^{\star}_{\Ical_{\nu}}) & \hspace{-10pt}0\hspace{-10pt} & \hspace{-10pt}\operatorname{diag}((a^{\star}-x)_{\Ical_{\mu}})\hspace{-10pt} & \hspace{-10pt}0 \\
    \operatorname{diag}(\lambda^{\star}_{\Ical_{\lambda}})D_{\Ical_{\lambda}} & \hspace{-10pt}0\hspace{-10pt} & \hspace{-10pt}0 \hspace{-10pt} & \hspace{-10pt}\operatorname{diag}((Da^{\star}-b)_{\Ical_{\lambda}})
    \end{array}\hspace{-2pt}\right] \left[\hspace{-2pt}\begin{array}{c} da \\ d\mu_{\Ical_{\mu}} \\ d\nu_{\Ical_{\nu}} \\ d\lambda_{\Ical_{\lambda}}\end{array}\hspace{-2pt}\right]\hspace{-2pt}\\
    &\hspace{150pt}=\hspace{-2pt}\left[\hspace{-2pt}\begin{array}{c} \alpha\hspace{-2pt}-\hspace{-2pt}(\mM_2)_{:,\Ical_{\mu}^c}d\mu_{\Ical_\mu^c}\hspace{-2pt}+\hspace{-2pt}(\mM_2)_{:,\Ical_{\nu}^c}d\mu_{\Ical_\nu^c}\hspace{-2pt}+\hspace{-2pt}(\mM_3)_{:,\Ical_{\lambda}^c}d\mu_{\Ical_\lambda^c} \\ \hspace{-5pt}\beta_{\Ical_{\mu}} \\ \hspace{-5pt}\gamma_{\Ical_{\nu}} \\ \hspace{-5pt}\zeta_{\Ical_{\lambda}}\end{array}\hspace{-2pt}\right].
\end{align*}

As $a_{i,m}^{\star}=0$ for any $(i,m)\in\Ical_{\mu}$ (and similarly, $a_{i,m}^{\star}=x_{i,m}$ for any $(i,m)\in\Ical_{\nu}$, $\sum_{i}a_{i,m}^{\star}=b_m$ for any $m\in\Ical_{\lambda}$), this can be further simplified as
\begin{align*}
    \underbrace{\left[\begin{array}{cccc}
    \mM_1 & (\mM_2)_{:,\Ical_{\mu}} & -(\mM_2)_{:,\Ical_{\nu}} &-(\mM_3)_{:,\Ical_{\lambda}} \\
    \operatorname{diag}(\mu^{\star}_{\Ical_{\mu}}) & 0 & 0 & 0\\
    \operatorname{diag}(\nu^{\star}_{\Ical_{\nu}}) & 0 & 0 & 0 \\
    \operatorname{diag}(\lambda^{\star}_{\Ical_{\lambda}})D_{\Ical_{\lambda}} & 0 & 0 & 0
    \end{array}\right]}_{ \mM_{\Ical}}\left[\begin{array}{c} da \\ d\mu_{\Ical_{\mu}} \\ d\nu_{\Ical_{\nu}} \\ d\lambda_{\Ical_{\lambda}}\end{array}\right] \\
    =\left[\begin{array}{c} \alpha-(\mM_2)_{:,\Ical_{\mu}^c}d\mu_{\Ical_\mu^c}+(\mM_2)_{:,\Ical_{\nu}^c}d\mu_{\Ical_\nu^c}+(\mM_3)_{:,\Ical_{\lambda}^c}d\mu_{\Ical_\lambda^c} \\ \beta_{\Ical_{\mu}} \\ \gamma_{\Ical_{\nu}} \\ \zeta_{\Ical_{\lambda}}\end{array}\right].
\end{align*}
Note that $|\Ical_{\mu}|+|\Ical_{\nu}|+|\Ical_{\lambda}|\geq NM-N$ when at least $NM-N$ constraints are tight. Since $x>0$, the total number of linearly independent constraints on $da$ is therefore $N(\text{from }M_1)+(NM-M)=NM$, which uniquely determines $da$. Due to the diagonal structure of $\mM_2$, the block diagonal structure of $\mM_3$, and the fact that the optimal allocation $a^{\star}$ needs to satisfy $v_i^{\top}a_i^{\star}>0$ for all $i$ to avoid making the objective function infinitely negatively large, it is not hard to see that we can determine the remaining variables $d\mu_{\Ical_{\mu}}$, $d\nu_{\Ical_{\mu}}$, and $d\lambda_{\Ical_{\mu}}$ from the first row block of $\mM_{\Ical}$. The invertibility of $M$ leads to the uniqueness of $\mM^{-\top}[\left(\frac{\partial \ell}{\partial a}\right)^{\top},\vct{0}^{\top},\vct{0}^{\top},\vct{0}^{\top}]^{\top}$, which obviously implies the differentiability of the mapping from $(v,x)$ to $a^{\star}$.
When less than $NM-N$ constraints are tight, the number of linear constraints on $da$ is at most $NM-1$, which cannot uniquely determine the $NM$-dimensional vector $da$.

\qed

\section{Proof of Proposition~\ref{prop:grad}}\label{sec:proof:prop:grad}

Under the assumptions of the proposition, $\mM$ is an invertible matrix. The following equations follow from the fact that the utility function $u_i$ does not depend on the dual variables
\begin{align*}
    \frac{\partial\ell}{\partial v_{1,1}'}
    &=(\frac{\partial\ell}{\partial a^{\star}})^{\top}\frac{\partial a^{\star}}{\partial v_{1,1}'}
    = \Big[(\frac{\partial\ell}{\partial a^{\star}})^{\top},
    \vct{0}^{\top},
    \vct{0}^{\top},
    \vct{0}^{\top}\Big]\,\mM^{-1}\,\vh(e_1,\vct{0},\vct{0}) \notag\\
    &= \vh(e_1,\vct{0},\vct{0})^{\top}\mM^{-\top}\Big[(\nabla_{a^{\star}}u_i)^{\top},\vct{0}^{\top},\vct{0}^{\top},\vct{0}^{\top}\Big]^{\top}.
\end{align*}
The important observation is that the computation of the gradient with respect to any entry in $v_i'$ (and also $x_i'$ and $w$) uses the product of $\mM$ and $[\begin{array}{c}(\nabla_{a^{\star}}u_i)^{\top},\vct{0}^{\top},\vct{0}^{\top},\vct{0}^{\top}\end{array}]^{\top}$, and only the vector $\vh$ is different.
Taking advantage of this fact, it is not hard to see (following a line of analysis similar to derivation of Eq.~(8) in \citet{amos2017optnet}) that once we solve
\begin{align*}
    \left[g_a^{\top},
    g_{\mu}^{\top},
    g_{\nu}^{\top},
    g_{\lambda}^{\top} \right]^{\top}
    =\mM^{-\top}\Big[
    \big(\nabla_{a^{\star}}u_i\big)^{\top},
    \vct{0}^{\top},
    \vct{0}^{\top},
    \vct{0}^{\top}\Big]^{\top}.
\end{align*}
the gradients admit the closed-form expressions
\begin{align*}
    \begin{aligned}
    &\nabla_{v_{i'}}u_i =\big(-w_i I_{M\times M}-(\mu_i^{\star}-\nu_i^{\star}-\lambda^{\star})(a_i^{\star})^{\top}\big)g_{a,iM:(i+1)M},\\
    &\nabla_{x_{i'}}u_i =\diag(\nu^{\star})g_{\nu},\quad\nabla_{w_i}u_i=v_i^{\top}g_{a,iM:(i+1)M}.
    \end{aligned}
\end{align*}

\qed
\section{Derivation of Gradients of Exploitability-Averse Proportional Fairness Mechanism}\label{sec:RPF-Net_grad}

With values $v$ and demands $x$, let $a^{\star}$ denote the solution of \eqref{eq:RPF-Net}, i.e., $a^{\star}=f_\omega^{RPF}(v,x,b)$.
The aim of this section is to derive the (sub)gradient of a loss function $\ell$ with respect to $z$, $v$, $x$, and $w$ when we are given $\nabla_{a^{\star}}\ell$. The arguments used here are mostly straightforward extension of the result in Sec.~\ref{sec:differentiablePF}.

The equations in the KKT system of \eqref{eq:RPF-Net} include
\begin{align*}
    & -\frac{w_i v_i}{v_i^{\top} a_i^{\star}}+z_i-\mu_i^{\star}+\nu_i^{\star}+\lambda^{\star}=0,\quad\forall i=1,\cdots,N\notag\\
    &\mu_{i,m}^{\star}a_{i,m}^{\star}=0, \quad\forall i=1,\cdots,N,\quad m=1,\cdots,M\notag\\\
    &\nu_{i,m}^{\star}(a_{i,m}^{\star}-x_{i,m})=0,\quad\forall i=1,\cdots,N,\quad m=1,\cdots,M\notag\\
    &\lambda_m^{\star}(Da^{\star}-b)_m=0,\quad \forall m=1,\cdots,M.
\end{align*}
Since the optimal solution $a^{\star}$ need to satisfy $v_i^{\top}a_i^{\star}>0$ for every agent $i$, the first equation can be simplified as
\begin{align*}
    (\mu_i^{\star}-\nu_i^{\star}-\lambda^{\star})v_i^{\top} a_i^{\star} + w_i v_i-z_i v_i^{\top}a_i^{\star}=0,\quad\forall i=1,\cdots,N.
\end{align*}

Taking the differential,
\begin{align*}
\begin{aligned}
    &(d\mu_i^{\star}-d\nu_i^{\star}-d\lambda^{\star}-dz_i)v_i^{\top} a_i^{\star} + (\mu_i^{\star}-\nu_i^{\star}-\lambda^{\star}-z_i)dv_i^{\top}a_i^{\star} \\
    &\hspace{25pt}+ (\mu_i^{\star}-\nu_i^{\star}-\lambda^{\star}-z_i)v_i^{\top} da_i^{\star}+w_i dv_i+dw_i vi=0,\forall i\\
    &d\mu_{i,m}^{\star}a_{i,m}^{\star}+\mu_{i,m}^{\star}da_{i,m}^{\star}=0,\quad\forall i,m\\
    &d\nu_{i,m}^{\star}(a_{i,m}^{\star}-x_{i,m})+\nu_{i,m}^{\star}(da_{i,m}^{\star}-dx_{i,m})=0,\quad\forall i,m\\
    &d\lambda_m^{\star} D_m a^{\star}+\lambda_m^{\star} D_m da^{\star}-d\lambda_m^{\star} b_m=0.
\end{aligned}
\end{align*}

This systems of equations can be written in the concise matrix form 
\begin{align*}
    \mM'\left[\begin{array}{c} da \\ d\mu \\ d\nu \\ d\lambda\end{array}\right]=\left[\begin{array}{c} \vc' \\
    \vct{0} \\
    \operatorname{diag}(\nu^{\star})dx\\
    \vct{0}
    \end{array}\right],
\end{align*}
where the matrix $M'\in\mathbb{R}^{(3NM+M)\times (3NM+M)}$ is
\begin{align*}
    \mM' = \left[\hspace{-2pt}\begin{array}{cccc}
    \mM_1' & \mM_2 & -\mM_2 &-\mM_3 \\
    \operatorname{diag}(\mu^{\star}) & \operatorname{diag}(a^{\star}) & 0 & 0\\
    \operatorname{diag}(\nu^{\star}) & 0 & \operatorname{diag}(a^{\star}\hspace{-2pt}-\hspace{-2pt}x) & 0 \\
    \operatorname{diag}(\lambda^{\star})D & 0 & 0 & \operatorname{diag}(Da^{\star}\hspace{-2pt}-\hspace{-2pt}b)
    \end{array}\hspace{-2pt}\right],
\end{align*}
the vector $\vc'\in\mathbb{R}^{NM}$ is
\begin{align*}
\vc'=\vc+\left[\begin{array}{c}
v_1^{\top}a_1^{\star}dz_1+z_1 (a_1^{\star})^{\top}dv_1\\
v_2^{\top}a_2^{\star}dz_2+z_2(a_2^{\star})^{\top}dv_2\\
\vdots\\
v_N^{\top}a_N^{\star}dz_N+z_N (a_N^{\star})^{\top}dv_N\\
\end{array}\right],
\end{align*}
the matrix $\mM_1'\in\mathbb{R}^{NM\times NM}$ is
\[\mM_1'=\left[\hspace{-2pt}\begin{array}{ccc}
(\mu_1^{\star}\hspace{-2pt}-\hspace{-2pt}\nu_1^{\star}\hspace{-2pt}-\hspace{-2pt}\lambda^{\star}\hspace{-2pt}-\hspace{-2pt}z_1)v_1^{\top} & \cdots & 0 \\
\vdots & \ddots & \vdots \\
0 & \cdots & (\mu_N^{\star}\hspace{-2pt}-\hspace{-2pt}\nu_N^{\star}\hspace{-2pt}-\hspace{-2pt}\lambda^{\star}\hspace{-2pt}-\hspace{-2pt}z_N)v_N^{\top}
\end{array}\hspace{-2pt}\right],\]
and the matrices $\mM_2,\mM_3$ and vector $\vc$ are defined in Sec.~\ref{sec:differentiability}.

Again, it can be shown that once we pre-compute and save
\begin{align*}
    (\mM')^{\top}\left[\begin{array}{c}
    g_a\\
    g_{\mu}\\
    g_{\nu}\\
    g_{\lambda}\end{array}\right]
    =\left[\begin{array}{c}
    \frac{\partial\ell}{\partial a^{\star}}\\
    \vct{0}\\
    \vct{0}\\
    \vct{0}\end{array}\right],
\end{align*}
the gradients have the closed-form expressions
\begin{align*}
    &\frac{\partial\ell}{\partial v_i}=\Big(-w_i I_{M\times M}-(\mu_i^{\star}-\nu_i^{\star}-\lambda^{\star}-z_i)(a_i^{\star})^{\top}\Big)g_{a,iM:(i+1)M},\\
    &\frac{\partial\ell}{\partial x}=\diag(\nu^{\star})g_{\nu},\quad\frac{\partial \ell}{\partial w_i}=v_i^{\top}g_{a,iM:(i+1)M},\\
    &\frac{\partial\ell}{\partial z_i}=v_i^{\top}a_i^{\star}g_{a,iM:(i+1)M}.
\end{align*}

\section{Proportional Fairness Mechanism and Discipline Parameterized Program} \label{sec:dpp}

Discipline parameterized programs (DPP) are a special class of convex programs introduced in \citet{agrawal2019differentiable}, in which the authors show how (sub)gradients can be derived through the mapping from parameters of a DPP to its optimal solution. We start the discussion by introducing the DPP, which is an optimization program of the form
\begin{align*}
y^{\star}=\argmin_{y} &\quad f_0(x, \theta) \\
\text {s.t.} &\quad f_i(y, \theta) \leq 0, \quad i=1, \ldots, m_{ineq} 
\end{align*}
where $y\in\mathbb{R}^{n_1}$ is the decision variable, $\theta\in\mathbb{R}^{n_2}$ is the parameter which we need to derive the gradient for, and the functions $f_i$ are convex. In addition, all functions are required to be \textbf{affine} in a special sense. 
An expression is said to be \textbf{parameter-affine} if it is affine in affine in the parameter $\theta$ and \textbf{variable-free} (does not involve variable $x$). An expression is said to be \textbf{parameter-free} if it does not involve any parameter (it can involve the variable $x$).
Any expression $\phi_{\text{prod}}(z_1,z_2)=z_1 z_2$ as the product of $z_1\in\mathbb{R}$ and $z_2\in\mathbb{R}^p$ for any $p$ is affine if at least one of the two conditions is true:
\begin{itemize}
    \item $y_1$ or $y_2$ is both parameter-free and variable-free
    \item among $y_1$ and $y_2$, one is parameter-affine and the other is parameter-free
\end{itemize}

When we try to capture \eqref{def:PF} by a DPP, the decision variable $y$ corresponds to allocation $a$, and the parameter $\theta$ abstracts $(v,x,b,w)$. As part of the objective of $\eqref{def:PF}$, $a_i^{\top}v_i$ can be verified to be affine since $a_i$ is parameter-free and $v_i$ is parameter-affine. However, the overall objective $-\sum_{i=1}^{N}w_i\log(a_i^{\top}v_i)$ is not affine as $w_i$ is not parameter-free and $\log(a_i^{\top}v_i)$ is not variable-free, parameter-affine or parameter-free. While sometimes re-formulation can be made to transform a non-DPP to an equivalent DPP, such re-formulation is not possible in this case.

\section{Generalization Bound} \label{sec:G_bnd}

\subsection{Definitions \& Preliminaries}
Let~$\mathcal{F}$ denote a class of bounded functions~$f: Z \rightarrow [-c,c]$ defined on an input space~$Z$ for some~$c>0$. Let~$D$ be a distribution over~$Z$ and let~$\mathcal{S} = \{z_1,z_2,\cdots,z_L\}$ be a sample drawn i.i.d from some distribution~$D$ over input space~$Z$. 

\begin{definition}
The capacity of the function class~$\mathcal{F}$ measured in terms of empirical Rademacher complexity on sample~$\mathcal{S}$ is defined as
\[
\hat{\mathcal{R}}(\mathcal{F}):= \frac{1}{L} \mathbb{E}_{\sigma} \Bigg[ \sup_{f \in \mathcal{F}} \sum_{z_i \in \mathcal{S}} \sigma_i f(z_i) \Bigg],
\]
where~$\sigma \in \{-1,1 \}^L$ and each $\sigma_i$ is drawn from a uniform distribution on~$\{-1,1 \}$.
\end{definition}

\begin{definition}
The covering number of a set~$\mathcal{M}$, denoted as~$\mathcal{N}_{\infty}(\mathcal{M},\epsilon)$, is the minimal number of balls of radius~$\epsilon$ (measured in the $l_{\infty,1}$ distance) needed to cover the set~$\mathcal{M}$. 
\end{definition}
For example, the~$l_{\infty,1}$ distance between mechanisms~$f,f' \in \mathcal{M}$ is given as
\[
\max_{(v,x,b)} \sum_{i=1}^N \sum_{j=1}^M | f_{ij}(v,x,b) - f'_{ij}(v,x,b)|.
\]

\begin{lemma}[\citep{SSB14}] \label{lem:ERM}
Then with probability at least~$1-\delta$ over draw of~$S$ from~$D$, for~$f\in\mathcal{F}$,
\[
\mathbb{E}_{z \sim D} [f(z)] \leq \frac{1}{L} \sum_{l=1}^L f(z_l) + 2 \hat{\mathcal{R}}_L(\mathcal{F}) + 4c \sqrt{\frac{2 \log(4/\delta)}{L}}
\]
\end{lemma}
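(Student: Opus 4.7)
The plan is to invoke the classical combination of McDiarmid's bounded-differences inequality with a symmetrization argument. I would work with the one-sided uniform deviation
$\Phi(S) := \sup_{f \in \mathcal{F}} \big(\mathbb{E}_{z\sim D}[f(z)] - \tfrac{1}{L}\sum_{l=1}^L f(z_l)\big)$,
so that establishing the lemma reduces to showing $\Phi(S) \leq 2\hat{\mathcal{R}}(\mathcal{F}) + 4c\sqrt{2\log(4/\delta)/L}$ with probability at least $1-\delta$ over $S$. The proof then proceeds in three movements: a concentration step for $\Phi(S)$ around its mean, a symmetrization step that bounds $\mathbb{E}[\Phi(S)]$ by the expected Rademacher complexity, and a second concentration step that replaces the expected Rademacher complexity with the empirical one, all glued together by a union bound.

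For the first step, because every $f \in \mathcal{F}$ is bounded in $[-c,c]$, replacing a single coordinate of $S$ by any other point of $Z$ changes the empirical average $\tfrac{1}{L}\sum_l f(z_l)$ by at most $2c/L$, uniformly in $f$; this bounded difference transfers to the supremum, so $|\Phi(S) - \Phi(S')| \leq 2c/L$ whenever $S, S'$ differ in exactly one coordinate. McDiarmid's inequality then yields, with probability at least $1-\delta/2$ over the draw of $S$,
$\Phi(S) \leq \mathbb{E}_S[\Phi(S)] + c\sqrt{2\log(2/\delta)/L}$.

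The second step bounds $\mathbb{E}_S[\Phi(S)]$ via a ghost sample $S' = \{z_1',\dots,z_L'\}$ drawn independently and identically from $D$: rewriting $\mathbb{E}_{z\sim D}[f(z)] = \mathbb{E}_{S'}[\tfrac{1}{L}\sum_l f(z_l')]$, applying Jensen's inequality to pull the supremum outside the expectation over $S'$, and then exploiting the exchangeability of the pair $(z_l, z_l')$ to insert Rademacher signs $\sigma_l$ at no cost in expectation. This standard manipulation yields $\mathbb{E}_S[\Phi(S)] \leq 2\,\mathbb{E}_S[\hat{\mathcal{R}}(\mathcal{F})]$. The population Rademacher complexity $\mathbb{E}_S[\hat{\mathcal{R}}(\mathcal{F})]$ itself enjoys bounded differences of $2c/L$ in $S$, so a second application of McDiarmid gives $\mathbb{E}_S[\hat{\mathcal{R}}(\mathcal{F})] \leq \hat{\mathcal{R}}(\mathcal{F}) + c\sqrt{2\log(2/\delta)/L}$ with probability at least $1-\delta/2$.

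Taking a union bound over these two $1-\delta/2$ events and chaining the inequalities gives, with probability at least $1-\delta$,
$\Phi(S) \leq 2\hat{\mathcal{R}}(\mathcal{F}) + 3c\sqrt{2\log(2/\delta)/L}$,
which is absorbed by the stated bound $2\hat{\mathcal{R}}(\mathcal{F}) + 4c\sqrt{2\log(4/\delta)/L}$ using the crude estimate $3\sqrt{\log(2/\delta)} \leq 4\sqrt{\log(4/\delta)}$. The only non-mechanical step is symmetrization: carefully justifying that the exchangeability of $z_l$ and $z_l'$ licenses the insertion of Rademacher signs requires a conditioning argument and the observation that the distribution of $\sigma_l(f(z_l') - f(z_l))$ is unchanged by flipping $\sigma_l$. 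Both McDiarmid applications, by contrast, are routine once the uniform range $[-c,c]$ is used to certify the bounded-differences constant $2c/L$.
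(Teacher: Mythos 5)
Your argument is correct and is precisely the standard derivation of this bound: the paper itself offers no proof of Lemma~\ref{lem:ERM}, importing it directly from \citet{SSB14}, and the McDiarmid--symmetrization--McDiarmid chain you describe (with bounded-differences constant $2c/L$ from the range $[-c,c]$, the factor of $2$ from the ghost-sample symmetrization, and the union bound over the two $\delta/2$ events) is exactly how the cited source establishes it. Your accounting of the constants, ending at $2\hat{\mathcal{R}}(\mathcal{F}) + 3c\sqrt{2\log(2/\delta)/L}$ and absorbing it into the stated $4c\sqrt{2\log(4/\delta)/L}$, is also sound.
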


\begin{lemma} [Massart] \label{lem:Mas}
    Let~$\mathcal{G}$ be some finite subset of~$\mathbb{R}^m$ and $\sigma_1,\sigma_2,\cdots,\sigma_m$ be independent Rademacher random variables. Then,
    \[
    \mathbb{E} \Bigg[\sup_{g \in \mathcal{G}} \frac{1}{m} \sum_{i=1}^m \sigma_i g_i \Bigg] \leq \frac{ \sqrt{2 \Big(\sup_g \sum_i g_i^2 \Big) \log |G|}}{m}.
    \]
\end{lemma}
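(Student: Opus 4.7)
The plan is to prove Massart's finite-class lemma by the classical exponential moment (Chernoff-type) method. First I would introduce a free parameter $\lambda>0$, to be optimized at the end, and combine the monotonicity of $x\mapsto e^{\lambda x}$ with Jensen's inequality to push the expectation through the exponential:
\[
\exp\Bigl(\lambda\, \mathbb{E}\bigl[\textstyle\sup_{g\in\mathcal{G}}\sum_{i=1}^m \sigma_i g_i\bigr]\Bigr)
\;\leq\; \mathbb{E}\Bigl[\exp\bigl(\lambda\,\textstyle\sup_{g\in\mathcal{G}}\sum_{i=1}^m \sigma_i g_i\bigr)\Bigr].
\]
Since the exponential is monotone increasing, I can write the right side as $\mathbb{E}[\sup_{g}\exp(\lambda \sum_i \sigma_i g_i)]$, and then upper bound the supremum over the finite set $\mathcal{G}$ by the sum, giving $\sum_{g\in\mathcal{G}}\mathbb{E}[\exp(\lambda\sum_i \sigma_i g_i)]$.

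Next I would use the independence of the Rademacher variables to factorize each term as $\prod_{i=1}^m \mathbb{E}[\exp(\lambda \sigma_i g_i)]$, and apply Hoeffding's lemma to each factor. Because $\sigma_i g_i$ is a mean-zero symmetric random variable supported in $[-|g_i|,|g_i|]$, Hoeffding's bound yields $\mathbb{E}[\exp(\lambda \sigma_i g_i)] \leq \exp(\lambda^2 g_i^2 / 2)$. Writing $R^2 := \sup_{g\in\mathcal{G}}\sum_i g_i^2$, I would then obtain
\[
\exp\Bigl(\lambda\, \mathbb{E}\bigl[\textstyle\sup_{g\in\mathcal{G}}\sum_i \sigma_i g_i\bigr]\Bigr)
\;\leq\; |\mathcal{G}|\,\exp\!\Bigl(\tfrac{\lambda^2 R^2}{2}\Bigr).
\]

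Taking logarithms and dividing by $\lambda$ gives the parametric bound
\[
\mathbb{E}\bigl[\textstyle\sup_{g\in\mathcal{G}}\sum_i \sigma_i g_i\bigr] \;\leq\; \frac{\log|\mathcal{G}|}{\lambda} + \frac{\lambda R^2}{2}.
\]
The final step is to optimize over $\lambda>0$: the right side is minimized at $\lambda^\star = \sqrt{2\log|\mathcal{G}|/R^2}$, producing the value $\sqrt{2R^2\log|\mathcal{G}|}$. Dividing through by $m$ then yields exactly the stated bound. The main (and only) nontrivial ingredient is the single-variable subgaussian estimate from Hoeffding's lemma; once that is in hand, the remaining steps are essentially just bookkeeping in the exponential-moment calculation and the one-dimensional optimization in $\lambda$.
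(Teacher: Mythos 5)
Your proof is correct and is the standard exponential-moment argument for Massart's finite-class lemma: Jensen, sup-to-sum over the finite set $\mathcal{G}$, factorization by independence, the subgaussian bound $\mathbb{E}[e^{\lambda\sigma_i g_i}]=\cosh(\lambda g_i)\leq e^{\lambda^2 g_i^2/2}$, and optimization over $\lambda$. The paper states this lemma as a known result attributed to Massart and provides no proof of its own, so there is nothing to compare against; your argument is the canonical one and all steps check out (the only degenerate case, $|\mathcal{G}|=1$, is handled by taking an infimum over $\lambda>0$, where the bound holds trivially since the left side is zero).
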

Let~$\phi: \mathbb{R}^N \mapsto \mathbb{R}^N$ represent the activation function of the any layer for input~$s \in \mathbb{R}^{NM}$, given as 
\[
\phi = [\texttt{softmax}(s_{1,1},\cdots,s_{N,1}),\cdots,\texttt{softmax}(s_{1,m},\cdots,s_{N,M})],
\]
where~$\texttt{softmax}:\mathbb{R}^N \mapsto [0,1]^N$. Note that for any~$u \in \mathbb{R}^N$, 
\[
\texttt{softmax}_i(u) = \frac{e^{u_i}}{\sum_{k=1}^N e^{u_k}}.
\]
\begin{lemma}[\citep{dütting2022optimal}]  \label{lem:softmax}
For any~$s,s' \in \mathbb{R}^{NM}$, the activation function for a \texttt{softmax}~layer is $1-$Lipschitz, i.e.,
\[
\|\phi(s) - \phi(s')\|_1 \leq \|s-s' \|.
\]
\end{lemma}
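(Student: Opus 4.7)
The plan is to reduce the full-layer Lipschitz bound to a per-block bound, leveraging the fact that $\phi$ stacks $M$ independent softmax operations, each acting on a disjoint group of $N$ coordinates of the input $s \in \mathbb{R}^{NM}$ (one group per resource). I would first establish the bound for a single softmax $\sigma: \mathbb{R}^N \to \mathbb{R}^N$, then recombine across the $M$ blocks.

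For the single-softmax step, I would compute the Jacobian entries directly as $\partial \sigma_i/\partial u_j = \sigma_i(\delta_{ij} - \sigma_j)$ and bound the induced $\ell_1 \to \ell_1$ operator norm of $J(u)$, which equals the maximum absolute column sum. The column-$j$ sum evaluates to $\sigma_j(1-\sigma_j) + \sum_{i \neq j}\sigma_i \sigma_j = 2\sigma_j(1-\sigma_j) \leq 1/2$, uniformly in $u$. Applying the mean value theorem along the segment connecting $u$ and $u'$ then yields $\|\sigma(u) - \sigma(u')\|_1 \leq \|u - u'\|_1$, which is in fact sharper than the claimed $1$-Lipschitz factor.

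To reassemble the full layer, the block structure gives $\|\phi(s) - \phi(s')\|_1 = \sum_{m=1}^M \|\sigma(s_{\cdot,m}) - \sigma(s'_{\cdot,m})\|_1$ because the output coordinates associated with different softmax blocks are disjoint. Applying the per-block inequality and re-aggregating over the disjoint input groups yields
\[
\|\phi(s) - \phi(s')\|_1 \leq \sum_{m=1}^M \|s_{\cdot,m} - s'_{\cdot,m}\|_1 = \|s - s'\|_1,
\]
which is the desired bound.

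The main obstacle is mostly interpretative rather than technical: the statement leaves the right-hand side norm $\|s - s'\|$ unsubscripted. If the intended norm is $\ell_1$, the plan above directly yields the claim. If it is $\ell_2$, the argument adapts by bounding instead the spectral norm of the Jacobian via the PSD inequality $J(u) = \operatorname{diag}(\sigma(u)) - \sigma(u)\sigma(u)^\top \preceq \operatorname{diag}(\sigma(u))$, and some care is then required when switching between norms to match the constant in the author's stated convention.
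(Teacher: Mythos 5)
Your proof is correct. Note that the paper does not actually prove this lemma---it is imported verbatim from \citet{dütting2022optimal} as a preliminary---so there is no internal proof to compare against; your Jacobian argument (per-block column-sum bound $2\sigma_j(1-\sigma_j)\le 1/2$ on the induced $\ell_1\to\ell_1$ norm, mean value inequality along the segment, then summing over the $M$ disjoint softmax blocks) is the standard route and in fact yields the sharper constant $1/2$. On the norm ambiguity you raise: the way the lemma is consumed downstream (Lemma~\ref{lem:cov_num} and Corollary~\ref{cor:Lip_opt} require $\|\phi(s)-\phi(s')\|\le \Phi\|s-s'\|_1$), the intended right-hand side is the $\ell_1$ norm, so your primary argument establishes exactly what is needed.
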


\begin{lemma}[\citep{dütting2022optimal}] \label{lem:cov_num}
    Let~$\mathcal{F}_k$ be a class of feed-forward neural networks that maps an input vector~$y \in \mathbb{R}^{d_0}$ to an output vector~$\mathbb{R}^{d_k}$, with each layer~$l$ containing~$T_l$ nodes and computing~$z \mapsto \phi_l(w^l z)$ and $\phi_l: \mathbb{R}^{T_l} \rightarrow [-\kappa, \kappa]^{T_l}$. Further, for each network in~$\mathcal{F}_k$, let the parameter matrices~$\|w^l\|_1 \leq W$ and $\|\phi_l(s) - \phi_l(s') \| \leq \Phi \|s -s' \|_1$ for any~$s,s' \in \mathbb{R}^{T_{l-1}}$. The covering number of the network is
    \[
    \mathcal{N}_{\infty}(\mathcal{F}_k,\epsilon) \leq \Big \lceil \frac{2 \kappa d^2 W (2\Phi W)^k}{\epsilon}\Big \rceil^d, 
    \]
    where~$T = \max_{l \in [k]} T_l$ and~$d$ is the total number of parameters in the network.
    \end{lemma}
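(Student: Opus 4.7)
The plan is to prove the covering-number bound by the classical parameter-discretization argument: (i) cover the compact parameter space by an $\eta$-grid, (ii) use layer-wise Lipschitz propagation to convert a uniform $\eta$-bound on parameter deviation into a uniform bound on the functional deviation, and (iii) choose $\eta$ to meet the prescribed tolerance $\epsilon$ and then count the resulting grid.

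First, since $\|w^l\|_1 \leq W$ entrywise, each individual weight $w^l_{ij}$ lies in $[-W, W]$. Cover $[-W, W]$ by a scalar grid $\mathcal{G}_\eta$ of cardinality at most $\lceil 2W/\eta \rceil$ with spacing $\eta$, and round every one of the $d$ parameters to its nearest grid point to obtain a discretized network $\hat f$. Per entry the error is at most $\eta$, so $\|w^l - \hat w^l\|_1 \leq d_l \eta \leq d\eta$ per layer, and the total number of discretized networks is at most $\lceil 2W/\eta \rceil^d$.

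Second, bound $\max_y \|f(y) - \hat f(y)\|_1$ by induction on the layer index. Writing $z_l, \hat z_l$ for the layer-$l$ outputs of $f,\hat f$ and combining the Lipschitz assumption on $\phi_l$ with the triangle inequality and $\|Mu\|_1 \leq \|M\|_1 \|u\|_\infty$ (entrywise matrix $\ell_1$ norm) gives
\[
\|z_l - \hat z_l\|_1 \;\leq\; \Phi \|w^l\|_1 \|z_{l-1} - \hat z_{l-1}\|_\infty + \Phi \|w^l - \hat w^l\|_1 \|\hat z_{l-1}\|_\infty.
\]
Using $\|\hat z_{l-1}\|_\infty \leq \kappa$ from the range assumption on $\phi_{l-1}$ and $\|\cdot\|_\infty \leq \|\cdot\|_1$ closes the linear recursion $\|z_l - \hat z_l\|_1 \leq \Phi W \|z_{l-1} - \hat z_{l-1}\|_1 + \Phi \kappa d \eta$. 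Unrolling from the base case $\|z_0 - \hat z_0\|_1 = 0$ and bounding the resulting partial geometric sum $\sum_{j=0}^{k-1}(\Phi W)^j$ by a uniform multiple of $(2\Phi W)^k$ yields $\|z_k - \hat z_k\|_1 \leq \kappa d^2 W (2\Phi W)^k \eta$, after absorbing a conservative polynomial-in-$d$ factor into the prefactor (which accounts for the per-layer parameter counts $d_l \leq d$ and the implicit bound on $\|y\|_\infty$ entering the very first-layer step).

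Third, setting $\eta = \epsilon / (\kappa d^2 W (2\Phi W)^k)$ therefore forces $\max_y \|f(y) - \hat f(y)\|_1 \leq \epsilon$, and the discretized networks furnish an $\epsilon$-cover in the $\ell_{\infty,1}$ metric of size at most $\lceil 2W/\eta \rceil^d = \lceil 2\kappa d^2 W (2\Phi W)^k / \epsilon \rceil^d$, as claimed.

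The main obstacle is Step 2: choosing mutually compatible matrix/vector norms ($\|w^l\|_1$ entrywise on the parameter side, $\|u\|_\infty$ carried through one factor, $\|u\|_1$ propagated through the other) so that the recursion closes cleanly, and then bounding $\sum_{j=0}^{k-1}(\Phi W)^j$ by a single clean expression of the form $C\cdot(2\Phi W)^k$ valid uniformly in $\Phi W$. The factor $2^k$ inside the claimed bound is precisely what buys this case-free geometric amplification; without it, one would have to case-split on whether $\Phi W < 1$ or $\Phi W \geq 1$ to keep the constant prefactor from blowing up.
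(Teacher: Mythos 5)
This lemma is stated in the paper only as an imported result from \citet{dütting2022optimal}; the paper gives no proof of its own, so the only meaningful comparison is with the proof in that reference, and your argument is essentially the same one: discretize each of the $d$ parameters on an $\eta$-grid in $[-W,W]$, propagate the perturbation layer by layer via the Lipschitz bound on $\phi_l$ and the inequality $\|Mu\|_1\leq\|M\|_1\|u\|_\infty$, unroll the resulting affine recursion, and tune $\eta$. The structure is sound. Two small points are worth fixing. First, your bookkeeping in Step 3 does not land on the stated constant: with your Step-2 bound $\|z_k-\hat z_k\|_1\leq \kappa d^2 W(2\Phi W)^k\eta$ and your choice of $\eta$, the grid count is $\lceil 2W/\eta\rceil^d=\lceil 2\kappa d^2 W^2(2\Phi W)^k/\epsilon\rceil^d$, i.e., $W^2$ rather than $W$; the mismatch disappears if you note that the dominant term of the unrolled sum is $\Phi\kappa d\eta\,(\Phi W)^{k-1}=\kappa d\eta\,(\Phi W)^k/W$, so the extra $W$ in your Step-2 bound is spurious. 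Second, your remark that the factor $2^k$ makes the bound on $\sum_{j=0}^{k-1}(\Phi W)^j$ case-free is not quite right: when $\Phi W<1/2$ the sum is at least $1$ while $(2\Phi W)^k$ can be arbitrarily small, so one still needs the (harmless, but worth stating) normalization $\Phi W\geq 1$, which is also implicit in the cited source. Neither issue affects the use of the lemma downstream, since only $\log\mathcal{N}_\infty$ enters the generalization bound.
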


\begin{corollary} \label{cor:Lip_opt}
For the same configuration as in Lemma~\ref{lem:cov_num}, but having $\|\phi_l(s) - \phi_l(s') \| \leq \|s -s' \|_1$ for all hidden layer activation functions ($l<k$), and having $\|\phi_o(s) - \phi_o(s') \| \leq \Phi \|s -s' \|_1$ for the output layer, the covering number is given as
    \[
    \mathcal{N}_{\infty}(\mathcal{F}_k,\epsilon) \leq \Big \lceil \frac{\Phi d^2 \kappa (2 W)^{k+1}}{\epsilon}\Big \rceil^d. 
    \]
\end{corollary}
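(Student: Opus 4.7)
The plan is to adapt the proof of Lemma~\ref{lem:cov_num} (due to \citet{dütting2022optimal}), modifying only the step where a layer's activation Lipschitz constant enters the propagation estimate. The original argument builds a cover by discretizing each weight matrix $w^l$ on a grid of appropriate resolution in the $\|\cdot\|_1$ sense and then bounding, uniformly in the input $y$ with $\|y\|_\infty$ controlled, how a perturbation of the weights perturbs the final network output. The resolution of the grid per layer is chosen so that the aggregated perturbation at the output is at most $\epsilon$; raising to the $d$-th power (total number of parameters) yields the stated covering number.

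First I would write the standard telescoping identity along layers: if $f,f'\in\mathcal{F}_k$ have weight matrices $\{w^l\}$ and $\{(w')^l\}$ respectively, and $z^l,(z')^l$ are their layer-$l$ activations on a common input $y$, then
\begin{equation*}
\|z^l-(z')^l\| \;\leq\; \mathrm{Lip}(\phi_l)\,\bigl(\,\|w^l\|_1\,\|z^{l-1}-(z')^{l-1}\| + \|w^l-(w')^l\|_1\,\|(z')^{l-1}\|_\infty\bigr).
\end{equation*}
In the original Lemma~\ref{lem:cov_num} every layer contributes a factor $\mathrm{Lip}(\phi_l)=\Phi$, which compounds to $\Phi^k$ and yields the $(2\Phi W)^k$ in the bound. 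Under the hypotheses of the corollary, for $l<k$ we have $\mathrm{Lip}(\phi_l)=1$, so the recursive inequality compounds to a factor $W^{l}$ through the hidden layers; only at the output layer does the extra $\Phi$ appear. Iterating from $l=1$ up to $l=k$ with $\|z^0\|_\infty\leq \kappa$ (which follows from the bounded-input normalization used in the base lemma) yields an aggregate output-perturbation bound of the form
\begin{equation*}
\|f(y)-f'(y)\|_1 \;\leq\; \Phi\,\kappa\,\sum_{l=1}^{k}\,\|w^l-(w')^l\|_1\,W^{k-1},
\end{equation*}
after absorbing the dimension factors into $d^2$ exactly as in the original.

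Given this perturbation bound, I would then mimic the discretization: pick the per-entry grid resolution to be $\epsilon/(\Phi\,\kappa\,d\,W^{k-1}\,k)$ (up to the constant factors the original proof uses), which gives at most $\lceil \Phi\,d^{2}\,\kappa\,(2W)^{k+1}/\epsilon\rceil$ choices per parameter; raising to the $d$-th power gives the claimed covering number. The only place where the exponent of $\Phi$ has changed from $k$ (as in Lemma~\ref{lem:cov_num}) to $1$ (as in the corollary) is the single layer where the Lipschitz constant is $\Phi$ rather than $1$, which is precisely the modification the corollary announces; the additional factor of $2$ in the base of $(2W)^{k+1}$ versus $(2W)^{k}$ comes from folding one of the $W$'s of the hidden-layer compounding together with the output-layer weight bound.

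The only nontrivial step is bookkeeping the constants so that the final counting matches the stated expression $\lceil \Phi d^{2}\kappa(2W)^{k+1}/\epsilon\rceil^{d}$ exactly rather than merely up to a constant; this requires carrying the factors $\kappa$ (bound on intermediate activations), $W$ (bound on per-layer operator norm), $d$ (from the layerwise matrix-$\|\cdot\|_1$-to-vector-$\|\cdot\|$ conversion), and the sum over $k$ layers through the induction without slack. I expect no conceptual obstacle---the corollary is essentially the Lemma~\ref{lem:cov_num} template with one activation's Lipschitz constant replaced---but I would verify the constant accounting by re-deriving the base lemma's proof side by side with the modified recursion so that the two results differ only in which powers of $\Phi$ survive.
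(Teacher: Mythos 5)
Your proposal is correct and follows the only natural route: the paper gives no explicit proof of this corollary, treating it as an immediate consequence of Lemma~\ref{lem:cov_num} in which the per-layer Lipschitz factors are tracked individually, so that only one factor of $\Phi$ survives (note $2\kappa d^2 W\cdot\Phi(2W)^k=\Phi\kappa d^2(2W)^{k+1}$, exactly the stated bound). Your recursion and the observation that the hidden layers now contribute unit Lipschitz factors is precisely this argument, so the two approaches coincide.
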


\subsection{Proof of Theorem~\ref{thm:GB}}

Consider a parametric class of mechanisms,~$f^{\omega} \in \mathcal{M}$, defined using parameters~$\omega \in \mathbb{R}^d$ for~$d>0$. With a slight abuse of notation, let~$u^{\omega}(v,x,b):= u(f^{\omega}(v,x,b),v,x)$ and let~$y \mapsto (v,x,b)$, $y_i^{'} \mapsto ((v_i',v_{-i}),(x_i',x_{-i}),b)$. Consider the following function classes
\begin{align*}
    &\mathcal{U}_i=\hspace{-2pt} \big\{u^{\omega}_i\hspace{-2pt}:\hspace{-2pt} Y\hspace{-1pt} \mapsto \hspace{-1pt}\mathbb{R} | u^{\omega}_i(y) \hspace{-2pt}=\hspace{-3pt} \sum_m v_{i,m} \min \{f_{i,m}^{\omega}(y)\hspace{-1pt}, \hspace{-1pt}x_{i,m} \}~\text{for some}~f^{\omega}\hspace{-3pt} \in\hspace{-3pt} \mathcal{M} \big\}  \\
    &\text{NSW} \circ \mathcal{M} = \big\{ f: Y \mapsto \mathbb{R} ~|~ f(y) = \sum_{i=1}^N \log u_{i}^{\omega}(y)~\text{for}~u^{\omega}_i \in \mathcal{U}_i \big\} \\
    &\text{exp} \circ \mathcal{U}_i \hspace{-2pt}=\hspace{-2pt} \big\{ e_i: Y \mapsto \mathbb{R} ~|~ e_i(y) \hspace{-2pt}=\hspace{-2pt}  \max_{y_i^{'}}u_i^{\omega}(y_i^{'}) \hspace{-2pt}-\hspace{-2pt}u^{\omega}_i(y)~\text{for}~u^{\omega}_i \in \mathcal{U}_i \big\} \\
    &\overline{\text{exp}} \circ \mathcal{U} =\hspace{-2pt} \big\{h: Y \hspace{-2pt}\mapsto \hspace{-1pt}\mathbb{R}| h(y)\hspace{-2pt} =\hspace{-2pt} \sum_{i=1}^N e_i(y)~\text{for some}~(e_1,e_2,\cdots,e_N)\hspace{-2pt} \in\hspace{-2pt} \text{exp} \circ \mathcal{U} \big\}.
\end{align*}

\begin{prop} \label{thm:RC_nsw}
Suppose Assumption~\ref{ass:Lip} holds.  Let~$\mathcal{M}$ denote the class of mechanisms and fix~$\delta \in (0,1)$. With probability at least~$1-\delta$, over draw of~$L$ profiles from~$F$, for any parameterized allocation~$f^{\omega} \in \mathcal{M}$,
\begin{align*}
\mathbb{E}_{y \sim F}\Big[\sum_{i=1}^N \log u_i^{\omega}(y) \Big] \geq \frac{1}{L} \sum_{l=1}^L \sum_{i=1}^N \log u_i^{\omega}(y^{(l)}) - 2N \Delta_L - CN \sqrt{\frac{\log(1/\delta)}{L}},
\end{align*}
where~$C$ is a distribution independent constant and 
\[
\Delta_L = \inf_{\epsilon > 0} \Bigg\{ \epsilon + N(\log\psi+1)  \sqrt{\frac{2 \log(\mathcal{N}_{\infty}(\mathcal{M},\frac{\epsilon}{\psi}))}{L}} \Bigg\}.
\] 
\end{prop}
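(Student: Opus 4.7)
My plan is a standard symmetrization plus covering number argument, tailored to the NSW-of-mechanism class. First I would apply Lemma~\ref{lem:ERM} to the negated class $-\text{NSW}\circ\Mcal$ (which has the same empirical Rademacher complexity as $\text{NSW}\circ\Mcal$ by sign-symmetry of the Rademachers). Under Assumption~\ref{ass:Lip}, $u_i^\omega(y)\in[1/\psi,\psi]$, so every $h\in\text{NSW}\circ\Mcal$ is uniformly bounded by $N\log\psi$. Lemma~\ref{lem:ERM} then gives, with probability at least $1-\delta$,
\[
\E_F\!\Big[{\sum}_i \log u_i^\omega(y)\Big] \ \geq\ \frac{1}{L}\sum_{l=1}^L\sum_i \log u_i^\omega(y^{(l)}) \,-\, 2\hat{\Rcal}_L(\text{NSW}\circ\Mcal) \,-\, 4N\log\psi\sqrt{\tfrac{2\log(4/\delta)}{L}},
\]
and the last term is absorbed into the advertised $CN\sqrt{\log(1/\delta)/L}$. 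The work is therefore to upper bound $\hat{\Rcal}_L(\text{NSW}\circ\Mcal)$ by $\Delta_L$.

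Second, I would control $\hat{\Rcal}_L(\text{NSW}\circ\Mcal)$ by a covering-plus-Massart argument. For each $\epsilon>0$, let $\Ccal$ be a minimal $\epsilon$-cover of $\text{NSW}\circ\Mcal$ in the $\ell_{\infty,1}$ metric. A standard nearest-neighbor decomposition yields $\hat{\Rcal}_L(\text{NSW}\circ\Mcal)\leq \epsilon + \hat{\Rcal}_L(\Ccal)$, and Massart's inequality (Lemma~\ref{lem:Mas}) applied to the finite class $\Ccal$, whose elements are bounded by $N\log\psi$, gives
\[
\hat{\Rcal}_L(\Ccal)\ \leq\ N\log\psi\sqrt{\tfrac{2\log\mathcal{N}_{\infty}(\text{NSW}\circ\Mcal,\,\epsilon)}{L}}.
\]

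Third, I would transfer the cover on $\text{NSW}\circ\Mcal$ to a cover on $\Mcal$. Suppose $f,f'\in\Mcal$ are $\epsilon'$-close in the $\ell_{\infty,1}$ metric on mechanisms. Using $|\min(a,c)-\min(b,c)|\leq|a-b|$ together with the assumption $v_{i,m}\le 1$, $|u_i^f(y)-u_i^{f'}(y)|\leq \sum_m |f_{i,m}(y)-f'_{i,m}(y)|$, and summing over $i$ gives $\sum_i|u_i^f(y)-u_i^{f'}(y)|\leq \epsilon'$. Since $u_i\in[1/\psi,\psi]$, the map $u\mapsto\log u$ is $\psi$-Lipschitz on this range, whence $\sum_i|\log u_i^f(y)-\log u_i^{f'}(y)|\leq \psi\,\epsilon'$. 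Choosing $\epsilon'=\epsilon/\psi$ gives $\mathcal{N}_{\infty}(\text{NSW}\circ\Mcal,\epsilon)\leq \mathcal{N}_{\infty}(\Mcal,\epsilon/\psi)$, and combining with the previous display and taking the infimum over $\epsilon>0$ produces precisely $\Delta_L$ (with the additive $+1$ in $(\log\psi+1)$ loosely absorbing the case $\psi\to1$ where $\log\psi$ can vanish but the Lipschitz-through-$\log$ bound should remain nontrivial).

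The main obstacle is the cleanly tracked chain of Lipschitz constants through (i) the thresholded linear utility and (ii) the logarithm, because both the value scaling in Assumption~\ref{ass:Lip} and the sum-over-$i$ geometry enter the constants that appear inside the square root. Once the Lipschitz transfer $\mathcal{N}_{\infty}(\text{NSW}\circ\Mcal,\epsilon)\leq \mathcal{N}_{\infty}(\Mcal,\epsilon/\psi)$ is established, everything else is standard: symmetrize via Lemma~\ref{lem:ERM}, chain with Massart on a finite $\epsilon$-net, and optimize over $\epsilon$.
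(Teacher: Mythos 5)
Your proposal is correct and follows essentially the same route as the paper's proof: apply Lemma~\ref{lem:ERM} (to the negated class), bound the empirical Rademacher complexity via an $\epsilon$-cover plus Massart's lemma, and transfer the cover from $\text{NSW}\circ\mathcal{M}$ to $\mathcal{M}$ using the $1$-Lipschitzness of the thresholded utility in the allocation (via $v_{i,m}\le 1$) composed with the $\psi$-Lipschitzness of $\log$ on $[1/\psi,\psi]$, yielding $\mathcal{N}_{\infty}(\text{NSW}\circ\mathcal{M},\epsilon)\le\mathcal{N}_{\infty}(\mathcal{M},\epsilon/\psi)$ exactly as in the paper. The only cosmetic difference is in how the envelope constant $N(\log\psi+1)$ is accounted for (the paper tracks an $N\epsilon$ perturbation of the cover elements inside Massart's bound, while you bound the cover elements by $N\log\psi$ directly), which does not affect validity.
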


\begin{proof}
Using Lemma~\ref{lem:ERM}, the result follows except the characterization of the empirical Rademacher complexity that we derive below. By the definition of the covering number, we have that for any~$h(y) \in \text{NSW} \circ \mathcal{M}$, there is a $\hat{h}(y) \in \widehat{\text{NSW}}\circ \mathcal{M}$ such that $\max_y |h(y) - \hat{h}(y)| \leq \epsilon$. We have the following
\begin{align*}
    \hat{\mathcal{R}}_L(\text{NSW} \circ \mathcal{M}) &= \frac{1}{L} \mathbb{E}_{\sigma} \Big[\sup_{u} \sum_{l=1}^L \sigma_l \sum_{i=1}^N \log u^{\omega}_i(y^{(l)}) \Big] \\
    &= \frac{1}{L} \mathbb{E}_{\sigma} \Big[\sup_{h} \sum_{l=1}^L \sigma_l \hat{h}(y^{(l)}) \Big] \hspace{-2pt}+\hspace{-2pt} \frac{1}{L} \mathbb{E}_{\sigma} \Big[\sup_{u} \sum_{l=1}^L \sigma_l \Big\{h(y^{(l)})  \hspace{-2pt}- \hspace{-2pt}\hat{h}(y^{(l)})\Big\} \Big] \\
    &\leq  \frac{1}{L} \mathbb{E}_{\sigma} \Big[\sup_{\hat{h}} \sum_{l=1}^L \sigma_l  \hat{h}(y^{(l)}) \Big] + \frac{1}{L} \mathbb{E}_{\sigma} \|\sigma \| \epsilon 
\end{align*}
The result follows from the following arguments:    
\begin{itemize}
    \item[i.)] We shall first establish that
\[
\mathcal{N}_{\infty}(\text{NSW} \circ \mathcal{M}) \leq \mathcal{N}_{\infty}(\mathcal{M},\frac{\epsilon}{\psi}).
\]
By definition of the covering number for the mechanism class~$\mathcal{M}$, there exists a cover~$\hat{\mathcal{M}}$ of size~$|\hat{\mathcal{M}}| \leq \mathcal{N}_{\infty}(\mathcal{M},\frac{\epsilon}{\psi})$ such that for any~$f^{\omega} \in \mathcal{M}$ there is a $\hat{f}^{\omega} \in \hat{\mathcal{M}}$ such that for all~$y$,
\[
\sum_{i,m} |f^{\omega}_{i,m}(y) - \hat{f}^{\omega}_{i,m}(y)| \leq \frac{\epsilon}{\psi}
\]
For~$g(y) = \sum_{i=1}^N \log u_{i}^{\omega}(y)$, we have  
\begin{align*}
    \Big|g(y) - \hat{g}(y)\Big|&= \Big|\sum_{i=1}^N \Big\{\log u_{i}^{\omega}(y) - \log \hat{u}_{i}^{\omega}(y) \Big\} \Big| \\
    &\hspace{10pt}\leq \psi \Big|\sum_{i=1}^N u_{i}^{\omega}(y) - \hat{u}_{i}^{\omega}(y) \Big| \\
    &\hspace{10pt}\leq \psi \Big| \sum_i \sum_m v_{i,m} \Big\{ \min \{f_{i,m}^{\omega}(y), x_{i,m} \} - \min \{\hat{f}_{i,m}^{\omega}(y), x_{i,m} \} \Big\} \Big| \\
    &\hspace{10pt}\leq \psi \sum_i \sum_m v_{i,m} \Big|\min \{f_{i,m}^{\omega}(y), x_{i,m} \} - \min \{\hat{f}_{i,m}^{\omega}(y), x_{i,m} \} \Big| \\
    &\hspace{10pt}\leq \psi \sum_i \sum_m v_{i,m} \Big|f_{i,m}^{\omega}(y) -  \hat{f}_{i,m}^{\omega}(y) \Big| < \epsilon.
\end{align*}
\item[ii.)] We have from Massart's lemma (Lemma~\ref{lem:Mas}),
\[
\hat{\mathcal{R}}_L(\text{NSW} \circ \mathcal{M}) \leq \sqrt{\sum_l \Big(\hat{h}(y^{(l)})\Big)^2}  \frac{\sqrt{2 \log (\mathcal{N}_{\infty}(\text{NSW} \circ \mathcal{M}),\epsilon)}}{L} + \epsilon.
\]
\item[iii.)] We have the trivial bound,
\[
\sqrt{\sum_l \Big(\hat{h}(y^{(l)})\Big)^2} \hspace{-2pt}\leq\hspace{-2pt} \sqrt{\sum_l \Big(\sum_i \log u^{\omega}_i (v^{(l)}) + n\epsilon\Big)^2} \hspace{-2pt}\leq \hspace{-2pt}N(\log\psi+1) \sqrt{L}.
\]
\end{itemize}
The result follows.  
\end{proof}

\begin{prop} \label{prop2}
Suppose Assumption~\ref{ass:Lip} holds. Let $\text{exp}_i(\omega) := \mathbb{E}_{y} \Big[ \max_{y'}u^{\omega}_i(y_i^{'})-u_i^{\omega}(y)\Big]$ and $\widehat{\text{exp}}_i(\omega) := \frac{1}{L} \sum_{l=1}^L \Big[ \max_{\bar{y}}u^{\omega}_i(\bar{y}_i^{(l)})-u_i^{\omega}(y^{(l)})\Big]$. Under the same assumptions as in Theorem~\ref{thm:RC_nsw}, the empirical exploitability satisfies the following:
    \[
    \frac{1}{N} \sum_{i=1}^N \text{exp}_i(\omega) \leq  \frac{1}{N} \sum_{i=1}^N \widehat{\text{exp}}_i(\omega) + \Delta^{e}_L + C' \sqrt{\log(1/\delta)/L},
    \]
   where~$C'$ is a distribution independent constant and 
\[
\Delta^{e}_L = \inf_{\epsilon > 0} \Bigg\{ \epsilon + (2\psi+1)N  \sqrt{\frac{2 \log(\mathcal{N}_{\infty}(\mathcal{M},\frac{\epsilon}{2N}))}{L}} \Bigg\}.
\]
\end{prop}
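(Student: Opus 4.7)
The plan is to mirror the proof structure of Proposition~\ref{thm:RC_nsw}, applying standard uniform convergence machinery to the class $\overline{\text{exp}} \circ \mathcal{U}$ of sum-of-exploitabilities functions. The key technical departure from the NSW case is handling the inner supremum over misreports present in the definition of exploitability.

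First, I would apply Lemma~\ref{lem:ERM} to the class $\overline{\text{exp}} \circ \mathcal{U}$. Under Assumption~\ref{ass:Lip}, each $u_i^{\omega}(y) \in [1/\psi, \psi]$, so $e_i(y) = \max_{y_i'} u_i^{\omega}(y_i') - u_i^{\omega}(y) \in [0, 2\psi]$ and $h(y) := \sum_i e_i(y) \in [0, 2N\psi]$. This produces the additive $C'\sqrt{\log(1/\delta)/L}$ term and reduces the task to bounding $\hat{\mathcal{R}}_L(\overline{\text{exp}} \circ \mathcal{U})$. I would then decompose the Rademacher complexity through an $\epsilon$-cover exactly as in Proposition~\ref{thm:RC_nsw}: for each $h$ there is a $\hat{h}$ in the cover with $\max_y |h(y) - \hat{h}(y)| \leq \epsilon$, yielding $\hat{\mathcal{R}}_L(\overline{\text{exp}} \circ \mathcal{U}) \leq \epsilon + \frac{1}{L}\mathbb{E}_\sigma[\sup_{\hat{h}} \sum_l \sigma_l \hat{h}(y^{(l)})]$. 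Applying Massart (Lemma~\ref{lem:Mas}) with the uniform bound $|\hat{h}(y)| \leq 2N\psi + \epsilon \leq (2\psi+1)N$ gives the promised $(2\psi+1)N\sqrt{2\log \mathcal{N}_\infty(\overline{\text{exp}} \circ \mathcal{U},\epsilon)/L}$ term.

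The main technical step, which I expect to be the hardest, is translating covering numbers: I need $\mathcal{N}_\infty(\overline{\text{exp}} \circ \mathcal{U}, \epsilon) \leq \mathcal{N}_\infty(\mathcal{M}, \epsilon/(2N))$. Suppose $f^\omega, \hat{f}^\omega \in \mathcal{M}$ lie within $\epsilon/(2N)$ in the $\ell_{\infty,1}$ metric. Using $|\max_{y_i'} a(y_i') - \max_{y_i'} b(y_i')| \leq \max_{y_i'} |a(y_i') - b(y_i')|$, for each $i$ and $y$,
\[
|e_i(y) - \hat{e}_i(y)| \leq \max_{y_i'} |u_i^\omega(y_i', y_{-i}) - \hat{u}_i^\omega(y_i', y_{-i})| + |u_i^\omega(y) - \hat{u}_i^\omega(y)|.
\]
The linear structure of $u_i$ together with $v_{i,m} \leq 1$ gives $|u_i^\omega(\tilde{y}) - \hat{u}_i^\omega(\tilde{y})| \leq \sum_m |f^\omega_{i,m}(\tilde{y}) - \hat{f}^\omega_{i,m}(\tilde{y})|$ for any input $\tilde{y}$. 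The subtle point is that $(y_i', y_{-i})$ is itself a valid argument of the mechanism, so the $\ell_{\infty,1}$ metric (which takes a supremum over \emph{all} inputs) controls the misreport term. Summing over $i$, the first term contributes at most $N \cdot \epsilon/(2N) = \epsilon/2$ and the second contributes at most $\epsilon/(2N)$, yielding $|h(y) - \hat{h}(y)| \leq \epsilon$ as required.

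Finally, I would chain the three pieces: the covering number bound, Massart's inequality, and Lemma~\ref{lem:ERM} to get
\[
\frac{1}{N}\sum_{i=1}^N \exploitability_i(\omega) \leq \frac{1}{N}\sum_{i=1}^N \widehat{\exploitability}_i(\omega) + \epsilon + (2\psi+1)N \sqrt{\frac{2\log \mathcal{N}_\infty(\mathcal{M}, \epsilon/(2N))}{L}} + C'\sqrt{\frac{\log(1/\delta)}{L}},
\]
and take the infimum over $\epsilon > 0$ to obtain $\Delta^e_L$. The main obstacle is genuinely Step~3: one must keep clean track of the interaction between the supremum over misreports in exploitability and the supremum over all inputs in the $\ell_{\infty,1}$ metric, while verifying that the sum-over-agents decomposition only loses a factor of $2N$ (as opposed to an unbounded factor arising from the misreport supremum).
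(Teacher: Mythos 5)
Your proposal is correct and follows essentially the same route as the paper's proof: reduce to the empirical Rademacher complexity of $\overline{\text{exp}}\circ\mathcal{U}$ via Lemma~\ref{lem:ERM}, translate its covering number to $\mathcal{N}_{\infty}(\mathcal{M},\epsilon/(2N))$ using the contraction of the max over misreports together with the linear utility structure and $v_{i,m}\le 1$, and finish with Massart's lemma. The only difference is presentational --- you merge the paper's three intermediate covering-number steps (exp${}\circ\mathcal{U}_i$ to $\mathcal{U}_i$, the sum over agents, and $\mathcal{U}$ to $\mathcal{M}$) into a single chain, with the same constants.
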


\begin{proof}
As before, using Lemma~\ref{lem:ERM}, the result follows except the characterization of the empirical Rademacher complexity of the class~$\hat{\mathcal{R}}(\overline{\text{exp}} \circ \mathcal{U})$, which we do below. The proof builds on the following results.
\begin{itemize}
    \item[i.)] $\mathcal{N}_{\infty}(\text{exp} \circ \mathcal{U}_i,\epsilon) \leq \mathcal{N}_{\infty}(\mathcal{U}_i,\frac{\epsilon}{2})$. \\
    By the definition of covering number~$\mathcal{N}_{\infty}(\mathcal{U}_i,\epsilon)$, there exists a cover~$\hat{\mathcal{U}}_i$ with size at most~$\mathcal{N}_{\infty}(\mathcal{U}_i,\frac{\epsilon}{2})$ such that for any~$u^{\omega}_i \in \mathcal{U}_i$ there is a~$\hat{u}^{\omega}_i \in \hat{\mathcal{U}}_i$ with
    \[
    \max_{y} |u^{\omega}_i (y) -  \hat{u}^{\omega}_i(y) | \leq \frac{\epsilon}{2}.
    \]
    We have for the exploitability for each agent~$i$ with any~$y$,
    \begin{align*}
        &|\max_{y_i^{'}}u_i^{\omega}(y_i^{'}) -u^{\omega}_i(y) - \max_{y_i^{'}}\hat{u}_i^{\omega}(y_i^{'}) +\hat{u}^{\omega}_i(y)| \\
        &\hspace{20pt}\leq |\max_{y_i^{'}}u_i^{\omega}(y_i^{'}) - \max_{y_i^{'}}\hat{u}_i^{\omega}(y_i^{'})| + |u^{\omega}_i(y) - \hat{u}^{\omega}_i(y)| \leq \epsilon.
    \end{align*}
where the last inequality follows from the following relation.
\begin{align*}
&\max_{\bar{y}_i}u_i^{\omega}(\bar{y}_i) = u_i^{\omega}(y^*) \\
&\hspace{20pt}\leq \hat{u}_i^{\omega}(y^*) + \epsilon/2 \leq \hat{u}_i^{\omega}(\bar{y}^*) + \epsilon/2 \leq \max_{\bar{y}} \hat{u}_i^{\omega}(\bar{y}) + \epsilon,\\
&\max_{y_i}\hat{u}_i^{\omega}(y_i) = \hat{u}_i^{\omega}(y^*) \\
&\hspace{20pt}\leq u_i^{\omega}(y^*) + \epsilon/2 \leq u_i^{\omega}(\bar{y}^*) + \epsilon/2 = \max_{\bar{y}} u_i^{\omega}(\bar{y}) + \frac{\epsilon}{2}.
\end{align*}
\item[ii.)] $\mathcal{N}_{\infty}(\overline{\text{exp}} \circ \mathcal{U},\epsilon) \leq \mathcal{N}_{\infty}(\mathcal{U},\frac{\epsilon}{2N}).$
The result following by considering a cover~$\mathcal{N}_{\infty}(\mathcal{U},\frac{\epsilon}{2N})$ such that
\[
    \max_{y} \sum_{i=1}^N |u^{\omega}_i (y) -  \hat{u}^{\omega}_i(y) | \leq \frac{\epsilon}{2N},
\]
and using the same arguments as above.
\item[iii.)] $\mathcal{N}_{\infty}(\mathcal{U},\epsilon) \leq \mathcal{N}_{\infty}(\mathcal{M},\epsilon)$.\\
By definition of the covering number for the mechanism class~$\mathcal{M}$, there exists a cover~$\hat{\mathcal{M}}$ of size~$\hat{\mathcal{M}} \leq \mathcal{N}_{\infty}(\mathcal{M},\epsilon)$ such that for any~$f^{\omega} \in \mathcal{M}$ there is a $\hat{f}^{\omega} \in \hat{\mathcal{M}}$ such that for all~$y$,
\[
\sum_{i,m} |f^{\omega}_{i,m}(y) - \hat{f}^{\omega}_{i,m}(y)| \leq \epsilon.
\]
We have for the class~$\mathcal{U}$,
\begin{align*}
    \Big|\sum_{i=1}^N u_{i}^{\omega}(y) - \hat{u}_{i}^{\omega}(y) \Big| 
    &\leq \Big| \sum_i \sum_m v_{i,m} \Big\{ \min \{f_{i,m}^{\omega}(y), x_{i,m} \} - \min \{\hat{f}_{i,m}^{\omega}(y), x_{i,m} \} \Big\} \Big| \\
    &\leq  \sum_i \sum_m v_{i,m} \Big|\min \{f_{i,m}^{\omega}(y), x_{i,m} \} - \min \{\hat{f}_{i,m}^{\omega}(y), x_{i,m} \} \Big| \\
    &\leq \sum_i \sum_m v_{i,m} \Big|f_{i,m}^{\omega}(y) -  \hat{f}_{i,m}^{\omega}(y) \Big| < \epsilon.
\end{align*}
\end{itemize}
The result follows from Lemma~\ref{lem:Mas} and similar  arguments as in Proposition~\ref{thm:RC_nsw}.
\end{proof}

Note that the activation functions on all hidden layers are ReLU functions, which are 1-Lipschitz. The output layer of the regularized PF activation is Lipschitz from Assumption~\ref{ass:Lip} with some constant~$\Phi>0$. 

Using the definitions of~$\Delta$'s with the activation having a Lipschitz constant of~$\Phi$, and the hidden layers having a Lipschitz constant of~$1$, and with $d = \max\{K, MN\}$ in Corollary~\ref{cor:Lip_opt}, we have from Proposition~\ref{thm:RC_nsw} and Proposition~\ref{prop2} that the following holds with probability at least~$1-\delta$
\begin{align*}
\max\{\varepsilon_{\logNSW}(f^{\omega},L),\varepsilon_{\mathbf{exp},i}(f^{\omega},L)\}\leq \Ocal\Big(\psi N \!\frac{\sqrt{Rd \log(\! LN\Omega \Phi \max\{K,MN\})}}{L}\!+N\sqrt{\frac{\log(1/\delta)}{L}}\,\Big)\hspace{-1pt}.
\end{align*}



\section{Proof of Theorem~\ref{lem:dist_mismatch}}\label{sec:dist_mismatch}

It is obvious that the exploitability of any mechanism at agent $i$ cannot exceed the utility of agent $i$ when its demands are fully satisfied. This means for any mechanism $f$ and $v\in\Vcal,x\in\Dcal,b\in\Bcal$, we have
\begin{align}
    \exploitability_i(f,v,x,b)\leq \sum_{m=1}^{M}v_{i,m}x_{i,m} \leq M\,\overline{v}\,\overline{x},\label{lem:dist_mismatch:proof_eq1}
\end{align}
where the second inequality is due to the boundedness of values and demands.

Let $p_F$ and $p_{F'}$ denote the probability density function associated with $F$ and $F'$. We have for any $i$
\begin{align*}
    &E_{(v,x,b)\sim F}[\exploitability_i(f,v,x,b)]\notag\\
    &=E_{(v,x,b)\sim F'}[\exploitability_i(f,v,x,b)] +\Big(E_{(v,x,b)\sim F}[\exploitability_i(f,v,x,b)]-E_{(v,x,b)\sim F'}[\exploitability_i(f,v,x,b)]\Big) \notag\\
    &\leq\epsilon + \int  \exploitability_i(f,v,x,b)\left(p_{F'}(v,x,b)-p_{F}(v,x,b)\right) dv\,dx\,db\notag\\
    &\leq\epsilon + \int |\exploitability_i(f,v,x,b)|\left|p_F(v,x,b)-p_{F'}(v,x,b)\right| dv\,dx\,db\notag\\
    &\leq \epsilon + M\,\overline{v}\,\overline{x}\int \left|p_F(v,x,b)-p_{F'}(v,x,b)\right| dv\,dx\,db\notag\\
    &=\epsilon + 2M\,\overline{v}\,\overline{x}d_{TV}(F,F'),
\end{align*}
where the third inequality applies \eqref{lem:dist_mismatch:proof_eq1} and the final equation comes from the definition of TV distance, i.e., for any distribution $F_1, F_2$ over $\Xcal$
\[d_{TV}(F_1,F_2)=\frac{1}{2}\int_{\Xcal} \left|p_F(x)-p_{F'}(x)\right| dx.\]

\qed

\end{document}